\newenvironment{tab}[1]
{%
\renewcommand{\arraystretch}{1.2} %
\begin{tabular}{@{}#1@{}}
\toprule
}
{\bottomrule
\end{tabular}}
\newcolumntype{L}[1]{>{\raggedright\let\newline\\\arraybackslash\hspace{0pt}}m{#1}}
\newcolumntype{C}[1]{>{\centering\let\newline\\\arraybackslash\hspace{0pt}}m{#1}}
\newcolumntype{R}[1]{>{\raggedleft\let\newline\\\arraybackslash\hspace{0pt}}m{#1}}
\tabularx{\textwidth}{@{}#1@{}}
\newcommand{\ensurepunct}[1]{\ifnum\spacefactor>\@m\relax\else#1\fi}
\let\Paragraph=\paragraph
\renewcommand{\paragraph}[1]{\Paragraph{#1\ensurepunct{.}}}
 \def\final{}
  \newcommand{\authornote}[2]{}
  \newcommand*{\authornote}[2]{\textcolor{#1}{[#2]}}
\newcommand*{\mct}   [1]{\authornote{magenta}{MCT: {#1}}}
\theoremstyle{definition}
\newtheorem{thm}     {Theorem}
\newtheorem{lem}[thm]{Lemma}
\newtheorem{prp}[thm]{Proposition}
\newtheorem{cor}[thm]{Corollary}
\newtheorem{dfn}[thm]{Definition}
\newtheorem{assn}[thm]{Assumption}
\newtheoremstyle{taskstyle}%
  {.32\baselineskip±.15\baselineskip}%
  {.32\baselineskip±.15\baselineskip}%
  {\it}%
  {}%
  {\bf}%
  {. }%
  { }%
  {}%
\theoremstyle{taskstyle}
\newcounter{subtasknumber}
\newcounter{subtasklabel}
\renewcommand{\thesubtasklabel}{\textbf{\thetasknumber}}
\newcommand{\mc}{\mathcal}
\newcommand{\msf}{\mathsf}
\newcommand*{\set}[2]{\{{#1}~|~{#2}\}}
\DeclareMathOperator{\st}{\,s.t.\,}
\newcommand{\given}{\mathrel{|}}
\DeclareMathOperator{\Cr}{Cr}
\DeclareMathOperator{\Fr}{Fr}
\newcommand{\inter}[1]{\mathtt{#1}}
\newcommand{\preisub}[2]{[{#1}{\rightarrow}#2]}
\newcommand{\isub}[2]{\preisub{\inter{#1}}{#2}}
\newcommand{\equals}{{=}}
\newcommand{\doo}{\mathrm{do}}
\newcommand{\Expect}{{\rm I\kern-.3em E}}
\newcommand{\prog}{s}
\newcommand{\cellcolor}[1]{\relax}
\newcommand{\changes}[1]{\underline{#1^\cdot}}
\newcommand{\both}[2]{\{{#1}, {#2}\}}
\begin{document}
\hyphenation{dis-crim-i-nat-ion non-dis-crim-i-nat-ion} %

\title{\Large Correspondences between Privacy and Nondiscrimination\\ \large Why They Should Be Studied Together}
\author{Anupam Datta\\ CMU \and Shayak Sen\\ CMU \and Michael Carl Tschantz\\ ICSI}

\maketitle

\begin{abstract}
Privacy and nondiscrimination are related but different.  We make this
observation precise in two ways.  First, we show that both privacy and
nondiscrimination have two versions, a causal version and a statical
associative version, with each version corresponding to a competing view of the
proper goal of privacy or nondiscrimination.
Second, for each version,  
we show that a difference between the privacy edition of the version and
the nondiscrimination edition of the version 
is related to the difference between Bayesian
probabilities and frequentist probabilities.
In particular, privacy admits both Bayesian and frequentist interpretations
whereas nondiscrimination is limited to the frequentist interpretation.
We show how the introduced correspondence allows results from one area
of research to be used for the other.
\end{abstract}

\section{Introduction}

Privacy and nondiscrimination appear both related and yet different.  The two
are both norms that help to ensure that powerful entities treat
people fairly by not violating social values.
However, privacy is typically seen as protecting information from
disclosure while nondiscrimination is seen as prohibiting certain
behaviors based upon known information about protected attributes,
such as gender or race.

In this paper, we explore the relationship between privacy and
nondiscrimination from a technical angle. We demonstrate that key aspects of
privacy and nondiscrimination mirror each other at a formal level, and
make a case for the exchange of techniques between the communities
independently studying the two.
Further, for each norm, there exist two stances within the research
community studying it: whether the norm, privacy or nondiscrimination,
should be measured in terms of association (probabilistic dependence)
or in terms of causation.
We express, in terms of probability theory,
the two stances for each of the two norms.
Doing so makes precise the exact difference between each pair of
stances,
revealing that each pair differs over the same issue.

In more detail, for privacy, the competing stances lead to
what we will call \emph{associative privacy} and \emph{causal privacy}.
Associative privacy properties typically show up in works
attempting to minimize
the knowledge gained by an adversary upon observing the outcomes of a
computation,
that is, works attempting to provide \emph{statistical nondisclosure}
(e.g.,~\cite{dalenius77statistik}).
Causal privacy instead focuses on whether some particular action
leads to a large change, with differential privacy being the prime
example
(e.g.,~\cite{warner65asa,dwork06crypto,dwork06icalp,bassily13focs,kasiviswanathan14jpc,mcsherry16blog1,mcsherry16blog2}).
Tschantz~et~al.\@ has already noted the causal nature
of differential privacy~\cite{tschantz17arxiv}.

As for nondiscrimination, two stances predominate in U.S. law,
\emph{disparate impact} and \emph{disparate treatment},
with similar counterparts in the rest of the world.
These are complex legal tests involving concepts that are difficult
to apply to algorithms, such as intent, and numerous caveats and exceptions.
However, at their cores are two standards, which we will call
\emph{associative nondiscrimination} and \emph{causal nondiscrimination}.
Associative nondiscrimination demands that
members of protected classes should not disproportionately suffer adverse actions.
Roughly speaking, to avoid a finding of associative discrimination, or disparate impact, a governed entity (e.g., a large employer) must ensure that the proportion of members of a protected class (e.g., women or a minority race) experiencing some adverse action (e.g., firing or not hiring) is roughly equal to the proportion of non-members experiencing it.
Courts have provided statistical characterizations of disparate impact in a number of settings, such as the 80\% standard in employment~\cite{burger71scotus}.
Causal nondiscrimination demands protected attributes do not cause people to experience some adverse action, which is the core of disparate treatment.
In short, to win a case under disparate treatment, the plaintiff must show, among other things, that the defendant subjected the plaintiff to some adverse action \emph{because} of the plaintiff's status as a member of a protected class.
For disparate treatment, the courts have used a common-sense approach that looks at motivations.
From it, we extract a mathematical characterization that focuses on just casual processes, ignoring motivations, along the lines of Pearl's treatment of the issue~\cite{pearl09book}.

Table~\ref{tab:properties-summary-words} provides a summary of the relationships between privacy and nondiscrimination.
It shows that properties for each norm vary along a binary axis falling into one of two stances: an associative or causal one.
We will call this axis the \emph{dependence axis} since association and causation are two forms of \emph{dependence} between random variables.
(``Statistical independence'' means a lack of association.  People often speak of ``causal dependence''.  Sometimes the word is used ambiguously as in ``the dependent variable''.)
It further shows that for each nondiscrimination property there is a corresponding privacy property.

\begin{table}
\centering
\renewcommand{\arraystretch}{1.4} %
\begin{tabular}{| l | l | l |}
  \hline
                      &  Associative &  Causal   \\
  \hline
Nondiscrimination
& Disparate impact
& Disparate treatment
\\
\hline
Privacy
&  Statistical nondisclosure
&  Differential privacy\\
 \hline
\end{tabular}
\caption{An informal summary of representative associative and causal nondiscrimination and privacy properties.
For each norm (row) and stance (column), the table provides an example of a property approaching that norm from that stance. We demonstrate that the core concepts of two rows are mathematically identical to one another
We also show that
columns represent a switch between causal and associative dependence. Table~\ref{tab:properties-summary3} represents each point in this grid in formal notation, exposing the structural correspondence between these properties.}
\label{tab:properties-summary-words}
\end{table}

As a result of the close correspondence between both stances of privacy and nondiscrimination
properties, we obtain a number of results `for free', which we describe in Section~\ref{sec:transfers}. In particular, the Dwork--Naor impossibility
result~\cite{dwork08jpc} translates to an impossibility of ensuring no disparate impact across
arbitrary subpopulations. Also, we point out research in both areas that can be repurposed
to solve the corresponding problem in the other area.

Of course, privacy and nondiscrimination are not the same.
While both deal with statistical associations and causal effects, they differ in which associations and causes are problematic.
Furthermore, our simple models of nondiscrimination and privacy abstract away the nuances of these social norms, such as exceptions to the general rules we represent.
However, these differences play little role in the mathematical analysis of or development of algorithms and verification techniques for the core properties capturing these norms.

There is, however, a key difference between nondiscrimination and privacy that is mathematically interesting: nondiscrimination focuses on adverse actions, whereas privacy sometimes deals with adverse actions but often deals with knowledge.
That is, while privacy and nondiscrimination share the dependence axis along which their properties differ,
privacy, unlike nondiscrimination, has a second axis.
This axis, the \emph{endpoint axis}, captures the difference between what has been called \emph{use privacy} (e.g.,~\cite{datta2017use}) and \emph{inferential privacy} (e.g.,~\cite{ghosh17itcs}).
Use privacy is similar to nondiscrimination in that it prohibits some actions from depending upon some sensitive fact.
Inferential privacy is more abstract in that it refers to the knowledge of an observer requiring that the observer maintains some degree of ignorance of some sensitive fact.

The two forms of privacy are related, but not equivalent.
An observer seeing an outcome that depends upon a sensitive fact may gain information about the sensitive fact, meaning that a lack of use privacy can imply a lack of inferential privacy.
Furthermore, an observer who knows the sensitive fact might use it to choose outputs inappropriately, meaning that a lack of inferential privacy can imply a lack of use privacy.
However, in both cases, \emph{can imply} does not mean \emph{does imply}.
An observer not knowing the dependence between the sensitive fact and the outputs gains no information and an observer may choose to not make use of its knowledge to inappropriately select outputs.

The mathematically interesting difference between use privacy and inferential privacy is use privacy should be measured in frequentist probabilities (or, more generally, physical probabilities), whereas inferential privacy should be measured in Bayesian probabilities.
Whereas the distinction between these two approaches to probability theory may appear to be merely a philosophical debate, we see here the practical distinctions between them.

In summary, we have three axes along which to navigate the space of properties:
(1) the norm: privacy or nondiscrimination,
(2) the notion of dependence: looking at either a change in association or in causation, and
(3) for privacy, the endpoint used: either measuring the change in use (frequentist probabilities) or in knowledge (Bayesian).

We will explore and make precise these differences and relations in this paper.
We first cover background and provide an overview of our results in Sections~\ref{sec:background} and~\ref{sec:overview}.
We then cover related work in Section~\ref{sec:related}.
In Section~\ref{sec:inf-flow}, we present a set of probabilistic definitions for formally stating causal and associative properties of systems, and prove relationships between their various forms. In Sections~\ref{sec:privacy} and~\ref{sec:nondiscrimination}, we instantiate
parameters in these definitions to obtain different notions of privacy and nondiscrimination respectively. Restating all properties in terms of this common substrate also allows us to transfer known theorems and methods about privacy to fairness, and vice versa, in Section~\ref{sec:transfers}.

\section{Background}
\label{sec:background}

\subsection{Probabilities, frequencies, and knowledge}

Probabilities are a useful tool to characterize a number of different concepts.
In this work, we distinguish between frequentist and Bayesian probabilities, or,
more generally, between physical and epistemic probabilities.

Frequentist probabilities represent the frequencies of
occurrence of events.
They are objective in that they measure a physical
property of the world.
They are a useful model for representing outcomes of
random coin flips or fractions of populations with a certain property.
Since they are often just called \emph{frequencies}, we denote frequentist probabilities using $\Fr$. Frequentist conditioning
restricts events to a smaller population. For example, if $O$ and $G$ are
random variables representing a certain hiring outcome and gender respectively,
then $\Fr[O = \text{`Hired'} \given G = \text{`Female'}]$ represents the frequency of women
that are hired out of the overall population, that is, the number of women hired divided by the size of the population restricted to just the females.
Note that in standard notation, the overall population is left implicit.

(Some authors use \emph{frequentist} and \emph{frequencies} in a strict sense limited to the case where the population size is infinite.  They might refer to such probabilities over finite populations as \emph{empirical frequencies} or \emph{physical probabilities}.  We note where this distinction may appear in our models of discrimination, but it can be safely ignored.)

Bayesian (or epistemic) probabilities represent
how certain a reasoning agent is about propositions being true.
They are subjective to the agent in that different agents may differ in the
probabilities they assign.
Since they are often just called credences, we denote Bayesian probabilities
using $\Cr$.  Bayesian conditioning represents a knowledge
update. For example $\Cr[O = \text{`Hired'} \given G = \text{`Female'}]$, represents the
certainty that a person was hired given that the agent knows that he or she was
female.
This decision may also depend upon the agent's background knowledge, which is considered fixed for the whole probabilistic analysis.
Thus, similar to the implicit overall population for frequencies, the background knowledge is typically left implicit.

In Table~\ref{tab:frequentist-bayesian}, we contrast a few probabilistic
notions and their interpretations in the frequentist and Bayesian views.
In situations in this paper where results apply to both forms, we denote probabilities
using $\Pr$.

\begin{table}
  \centering
\begin{tab}{lll}
 Concept  & Frequentist & Bayesian \\
  \midrule
 Probability      & Frequency  & Credence \\
 Probability space & Population & Background knowledge \\
 Conditioning     & Population restriction &  Knowledge revision  \\
 Association      & Co-occurrence & Evidence (information leakage)\\
\end{tab}
  \caption{Comparison of frequentist and Bayesian Concepts}
  \label{tab:frequentist-bayesian}
\end{table}

\subsection{Pearl's Account of Causation}
\label{sec:pearl}

We recount Pearl's theory of causation as background~\cite{pearl09book}.

\begin{dfn}
  A causal model is a triple $M = \langle U, V, F \rangle$ where $U$
  is a set of variables, called background variables (or exogenous), $V$ is set of variables, called
  endogenous, and $F$ is a set of functions ${f_1, \ldots,
  f_n}$, called structural equations, where each $f_i$ is a mapping that
  defines $V_i$ in terms of all other variables in $U\cup V$.
\end{dfn}

We assume, as Pearl normally does, that the equations are \emph{recursive}, that is, there is an ordering on $U \cup V$ such that all of $U$ comes before all of $V$, and that for each $f_i$, the variables in $U \cup V$ that it uses all come before $V_i$.
Under this assumption, given the values of the variables in $U$, one can compute the value for any $V_i$ in $V$ by computing the values of each variable in that order until reaching $V_i$.
We use $M.V_i(u)$ to denote the computed value.

An intervention $\doo(X{=}x)$ on an endogenous variable $X$, in a model $M$, replaces
the equation corresponding to $X$ with $x$, resulting in a new model $M_{X=x}$.

\begin{dfn}
A probabilistic causal model is a pair $\langle M, B\rangle$,
where $M$ is a causal model and $B$ is a background probability function defined over the domain of $U$, the background variables.
\end{dfn}
$B$ makes the the probabilities assigned to background variables explicit, whether that comes from an underlying population (frequentist) or background knowledge (Bayesian).
W%
For an assignment $u$ of values to all background variables $U$,
we will write $B(u)$ or $\Pr[U{=}u \given B]$, depending upon context.

For recursive models, the probabilities over background variables can be lifted to a
probability over an endogenous variable $Y$:
\[\Pr[Y{=}y \given B] = \sum_{\{u | M.Y(u){=}y\}} B(u).\]

The probability of counterfactual statements is defined by probabilities with respect to the model $M_{X{=}x}$.

Pearl prefers Bayesian probabilities~\cite{pearl09book}, but they can be interpreted either way.
From a Bayesian viewpoint, where these probabilities represent beliefs, the lifting of probabilities is only sensible under the assumption that the agent knows the structural equations.

We modify our notation slightly make this assumption explicit by showing the structural equations as explicit conditions in our probabilities.
Typically, our probabilities are of the form of $\Pr[\varphi \given SE, B]$ where the context comprises of
two parts:
the background probability distribution $B$ and
the structural equations $SE$.
$SE$ represents structural equations that define endogenous
variables. $SE$ contains exactly one structural equation for each endogenous
variable of the form $X = f(U_{i_1},\ldots, U_{i_k}, X_{j_1}, \ldots,
X_{j_k})$, where $U_{i_1},\ldots, U_{i_k}, X_{j_1}, \ldots, X_{j_k}$ are the
predecessors of $X$ in the structural model.

An intervention on an endogenous variable $X$ with value $x$ in the model $SE$
is denoted by $\isub{X}{x}SE$, which replaces the equation for $\inter{X}$ in $SE$ with
$\inter{X}{=}x$.

We assume that the background variables do not refer to or depend upon the structural equations.
In this case, since endogenous variables are defined by structural equations when the rest of a probability
expression only refers to background variables, structural equations can be substituted. In other
words, the choice of structural equations is independent to all other background variables.

\begin{assn}[SE-independence]\label{assn:se-independence}
  For each $\varphi$, consisting of only background variables, and background distribution $B$ for any sets of equations $SE$, $SE'$,
  \[\Pr[\varphi \given SE, B] = \Pr[\varphi \given SE', B]\]
\end{assn}

When using Bayesian probabilities, or credences, the background distribution
represents background knowledge of the agent doing the reasoning,
which we typically view as an adversary, and not some objective
population of outcomes.
In this case, Assumption~\ref{assn:se-independence} means that
adversary does not background knowledge about the background variables
that depends upon the structural equations.

We will often consider structural equation models representing the system of
the following form: $SE = \{\inter{X}\equals X, \inter{A}\equals A,
\inter{O}\equals \prog(\inter{X}, \inter{A})\}$, where $\inter{X}, \inter{A},
\inter{O}$ are endogenous variables, and $X$ and $A$ are background.
$\prog$ represents a system that is the subject of our privacy or nondiscrimination analysis.
$\inter{O}$ represents its output.
$X$ represents some sensitive attribute that, intuitively, $\prog$ should not use to compute $\inter{O}$.
$\inter{X}$ represents an input to the system representing $X$.
$A$ are other attributes that the system intuitively may use and $\inter{A}$ is the input to $\prog$ representing them.

Similar to Tschantz~et~al.~\cite{tschantz17arxiv}, we distinguish between the actual attribute $X$ and an input $\inter{X}$ representing it since we want to discuss how the system's output would change as its inputs change without considering all the other changes that would follow from changing the actual value of the sensitive attribute.
For example, a causal intervention to a person's race not only is conceptually difficult to comprehend~\cite{holland86jasa}, but would have a wide range of effects, such as likely resulting in a different spouse, address, and job, taking us far from the system $\prog$.
As a result, we don't allow intervening on the race itself variable
$X$ in our model, which is enforced by Pearl's theory disallowing
interventions on background variables.
On the other hand, since the endogenous variable $\inter{X}$ is
an input to well defined structural equations representing a system
$\prog$, it is possible to model the effects of intervening on the values
of this variable.
The same reasoning applies to $A$ and $\inter{A}$.

In our work, we assume that an adversary has complete knowledge of the model. While this
is a standard assumption in cryptography and privacy, interestingly, many leakage results
do not apply to weaker adversaries with partial knowledge about the model. In the weaker
adversary model, observing outputs can leak information about the model, which
can then leak information about secrets. However, this channel for information
leakage is not permitted in the case where the programs are completely known.

\section{Overview}
\label{sec:overview}

We categorize discrimination and privacy properties by the notion of
dependence they use: \emph{associative} or \emph{causal}
(see Table~\ref{tab:associative-causal}).
For each combination of norm and notion of dependence,
we provide a precise mathematical model of
a point in the space of fairness concepts.
For the privacy concepts, which have already been extensively studied by
the CS research community, we sometimes find pre-existing definitions
occupying one of these points.
In that case,
we restate the pre-existing definition in the standard form of
Tschantz~et~al.~\cite{tschantz17arxiv} to make its associative or
causal nature apparent.
We also note the kind of probability
employed. Statistical nondisclosure is a property about an adversary's
state of knowledge and is therefore represented as credences.
On the other hand,
differential privacy is a statement about the true distribution
over the outcomes of a program and is therefore represented by frequencies.
We also briefly consider an associative notion of privacy using frequencies
and a causal notion using credences.

The nondiscrimination properties come from complex legal tests that
include issues such as motivations and exceptions for unavoidable discrimination.
We will not model these tests in their completeness, but rather their
core conceptions of when discrimination (perhaps unavoidable and
legal) occurs.
Our models show that they too boil down to associative and causal
variants.

\begin{table}
  \centering
\begin{tab}{ll}
  Associative & Causal \\
  \hline
  Conditioning & Intervention \\
  Independence & Causal Irrelevance \\
  Association  & Influence\\
\end{tab}
  \caption{Comparison of causal and associative conceptions of dependence}
  \label{tab:associative-causal}
\end{table}

Our formalizations of all of these properties can be stated as a difference in outcomes across \emph{changes} in certain variables.
Each of the properties is about comparing some \emph{point of comparison} for two values that the a sensitive or secret attribute $S$ could have taken on.
The property demands that the point of comparison remains roughly the same whether $S$ took on $s$ or $s'$, meaning that the attribute $S$ plays at most a minor role in the determining the value of the point of comparison.

For example, recall disparate treatment, the U.S. legal standard for when people discriminated against \emph{because} of
having some protected attribute.
Intuitively, determining whether a woman has suffered from disparate treatment with respect to gender requires comparing two worlds: the actual world in which the women is female and a hypothetical world in which she were a male.
For each world, one computes the probability of the women experiencing an adverse outcome.
This calculation, done twice, is the point of comparison for disparate treatment.

For a more detailed example, recall that for a system $\prog$ to have $\epsilon$-differential privacy the following must hold
for all databases $d$ and $d'$ that differ by one person's data and all outputs $o$:
\begin{align}
 \Fr[\prog(d) = o] &\leq e^\epsilon  \Fr[\prog(d') = o]
\end{align}
(For simplicity, we only deal with discrete data in this paper, removing the need to consider sets of outputs.)
The point of comparison for differential privacy is the probability of $\prog$ having various outputs for various databases.
We will denote this as $\Fr[\prog(\changes{d}) = o]$ highlighting that the database changes from $d$ to $d'$ across the comparison by underlining it.
In addition to the point of comparison, differential privacy is also defined by
over what databases the comparison is made (those that differ by one entry),
over what outputs $o$ the comparison is made (all of them),
what the comparison demands (that $\cdot \leq e^\epsilon \cdot$ holds).
However, focusing on the point of comparison highlights the differences between properties we are interested in.

To make its causal form more apparent, following Tschantz~et~al.\@, we will rewrite differential privacy as
\begin{align}
 \Fr[\inter{O}\equals o \given \isub{\inter{X}}{x}SE] &\leq e^\epsilon \Fr[\inter{O}\equals o \given \isub{\inter{X}}{x'}SE]
\end{align}
using notation introduced in Section~\ref{sec:pearl}.
$\inter{O}$ is a random variable representing the output of the system $\prog$.
$\inter{X}$ represents the entry that changes between the two databases.
$x$ and $x'$ are the values by which the two databases differ, with a special value $\bot$ denoting that the entry is missing altogether.
(We are using the \emph{bounded} model of differential privacy, which differs slightly from the original definition, but not in a way material to the points we wish to make.)
Recall that $SE$ is the structural equations causally relating random variables.
In particular,
\begin{align}
\inter{X} &= X\\
\inter{A} &= A\\
\inter{O} &= \prog(\inter{X}, \inter{A})
\end{align}
where $X$ is a background variable corresponding to the actual value of sensitive attribute (the data point that changes),
$\inter{X}$ is the input to the system that represents this attribute,
$A$ is a background variable corresponding to the actual value of the other attributes (other data points),
$\inter{A}$ is the input to the system that represents these other attributes,
and $\inter{O}$ is the output.
$\Pr[\inter{O}\equals o \given \isub{\inter{X}}{x}SE]$ is the probability that the outcome variable $\inter{O}$ takes on the value $o$ given that a causal intervention set $\inter{X}$ to $x$.
This causal intervention differs from standard probabilistic conditioning in that it breaks correlations, preventing confounding, similar to how randomization does so in experiments.
Using this notation, the point of comparison is $\Pr[\inter{O}\equals o \given \isub{\inter{X}}{\changes{x}}SE)]$

A more extreme privacy property, a probabilistic version of noninterference~\cite{gm82security}, would demand equality:
\begin{align}
 \Fr[\inter{O}\equals o \given \isub{\inter{X}}{x}SE] &= \Fr[\inter{O}\equals o \given \isub{\inter{X}}{x'}SE]
\end{align}
For both differential privacy and noninterference, the point of comparison is the same, but the comparison relation differs from $\cdot \leq e^\epsilon \cdot$ and $\cdot = \cdot$.

Since our work does not discuss the tradeoffs between various comparison relationships, we will sometimes write just the point of comparison when discussing properties.
Using this shorthand, Table~\ref{tab:properties-summary3} lists and organizes six representative properties, or really representative clusters of properties.
In each case, we would have to also specify the comparison relation used and under what conditions the comparison is to be done to fully specify the property, but providing just point of comparison is sufficient to see the patterns that concern us in this work.

\begin{table}
\raggedright
\begin{tabular}{| p{2.25cm} | p{4.75cm} | p{4.75cm} |}
  \hline
                      &  Associative (conditioning)  &  Causal (intervening)   \\
  \hline
{\hyphenchar\font=-1 Discrimination} %
&  Disparate impact for $\inter{X}$ in population $B$:
\[ \Fr[\inter{O}\equals o \given \inter{X}\equals \changes{x}, SE, B] \]
& Disparate treatment on $\inter{X}$ for individuals in $B$:
\[ \Fr[\inter{O}\equals o \given \isub{X}{\changes{x}}SE, B] \]\\

 \hline

{\hyphenchar\font=-1 Privacy} %
&
Indirect use privacy~\S\ref{sec:indirect-use-privacy}:
\[ \Fr[\inter{O}\equals o \given \inter{X}\equals \changes{x}, SE, B] \]
Statistical nondisclosure of secret $\inter{X}$ and knowledge $B$~\cite{dalenius77statistik}:
\[ \Cr[\inter{X}\equals x \given \changes{\inter{O}\equals o} , SE, B] \]
\cellcolor{yellow!10}
& Noninterference~\cite{gm82security} and differential privacy~\cite{dwork06crypto} with secret $\inter{X}$:
\[ \Fr[\inter{O}\equals o \given \isub{X}{\changes{x}}SE, B] \]
\\
 \hline
\end{tabular}
\caption{Key clusters of properties.  For each cluster of properties, we show just one or more of its \emph{points of comparison}, the probabilities whose change in value should be minimized or bounded as the underlined term changes values.  In some cases, more than one pre-existing property may arise depending upon exactly how the comparison is made, exactly what is in secondary terms (e.g., $B$), and exactly under what circumstances the comparisons are made.  For example, the differences between noninterference and differential privacy include that noninterference requires exact equality for any two values of $\inter{X}$ whereas differential privacy requires approximate equality for only those values of $\inter{X}$ that differs in a single person's data.}
\label{tab:properties-summary3}
\end{table}

Table~\ref{tab:properties-summary3} reveals a similarity in both stances on the notion of dependence to use for privacy and nondiscrimination. The points of comparison are associative in the left column and causal in the right.
The properties in the left column minimize or bound associations.
The properties in the right column minimize or bound causal effects.

The quadrant for associative privacy is unlike the others in that we show two points of comparison belonging to this cluster of properties.
The first point of comparison, which we call \emph{indirect use privacy}, shows the correspondence between the four quadrants more clearly by differing from its neighbors in a minimal number of ways.
We could not find this point of comparison in prior work, although it is related to Pufferfish Privacy~\cite{kifer14database}, differing only by using frequencies instead of credences.
To help relate our four clusters to prior work, we also show statistical nondisclosure, one of the most well known associative privacy properties.
This property differs from the other quadrants' points of comparison by using credences and by flipping around the attributes that are measured and conditioned upon.
In Section~\ref{sec:privacy}, we take a closer look at the differences between associative privacy definitions.

Moving across the columns of the two rows of Table~\ref{tab:properties-summary3}, we see that for each stance on privacy, there exists an identical stance on nondiscrimination, and vice versa.
This correspondence shows the tight relationship between privacy and nondiscrimination, and the opportunity to reuse tools from one area of research in the other.

Despite being identical at the level of abstraction shown in Table~\ref{tab:properties-summary3}, privacy and nondiscrimination are, of course, not the same.
As mentioned, our models of discrimination and privacy only account for the core essence of some conceptions of these complex, multifaceted, and contested concepts.
Additionally, the secret or sensitive attribute $X$ will be instantiated differently for the two norms.
For example, race and gender are quintessential instantiations of $X$ for nondiscrimination, but it is harder to argue that they should be kept secret as an instantiation of $X$ for privacy.
Furthermore, the table contains statistical nondisclosure without a corresponding property for nondiscrimination.

\section{Related Work}
\label{sec:related}

While ours is the first comprehensive exploration of the correspondence
between causation and association, and privacy and nondiscrimination,
prior work has examined some of these connections in isolation.

\paragraph{Privacy and discrimination are similar.}
Implicitly making use of the similarity between privacy and nondiscrimination,
Dwork~et~al.\@ define a notion of fairness that requires
that similar people be treated similarly, and formalize this as a Lipschitz
continuity requirement\cite{dwork12itcs}.
They point out the relationship between this
notion of fairness as continuity and differential privacy: that differential
privacy can be viewed as a special case of continuity for the Hamming distance
metric on databases. We analyze this relationship through a causal lens and by
casting both nondiscrimination and differential privacy as restrictions on the
causal use of information.

In more detail,
mathematically, they use a metric $d$ on individuals that captures how
similar they are with respect to the classification task.
They also use a second metric $D$ over distributions of outcomes and
represent classifiers as a function $M$ from individuals to
distributions over outcomes.
They require that the $(D,d)$-Lipschitz property holds:
for all pairs of individuals $x$ and $y$, $D(M(x), M(y)) \leq d(x,y)$.
While they leave $d$ largely abstract since it is application specific,
they mostly focus on two possibilities for $D$:
statistical distance (total variation norm) and
the relative $\ell_\infty$ metric.
We focus on the second here since it is more similar to the other
definitions we have considered, and to differential privacy in particular
(their Section~2.3).
In our notation and treating it as a causal property, this would be the
requirement that for all individuals $x$ and $y$ and outputs $o$,
\begin{align}
\Fr[\inter{O}\equals o \given \isub{X}{x}SE, B]
\leq e^{d(x,y)} \Fr[\inter{O}\equals o \given \isub{X}{y}SE, B]
\end{align}
The point of comparison is
$\Fr[\inter{O}\equals o \given \isub{X}{\changes{x}}SE, B]$,
which is of the same form as differential privacy, although, here, $x$ ranges
over individuals instead of databases.

The authors object to group parity as insufficient for ensuring
fairness, although they do show conditions under which their condition
implies group parity (their Section~3), which are somewhat
similar in goal to our Theorems~\ref{thm:independence-intervention}
and~\ref{thm:epsilon-causal-epi-irr}, but rather different in form.
Like us, they use the connection between privacy and nondiscrimination
to transport results from privacy to nondiscrimination.
While they focus on algorithmic results (their Section~5)
rather different from our focus on concepts and impossibility results,
they do informally consider issues (their Section~3.1, Example~3, and
Section~6.3) similar to ones we discuss in our
Section~\ref{sec:nondiscrimination-interpretation}.

A difference between our work and theirs is that we focus on
prohibitions against using a certain attribute $X$ whereas they focus
on a requirement to use only a certain attribute implicitly defined by
the metric $d$.
While their approach is principled, it differs from current
antidiscrimination that places prohibitions on using protected
attributes (Section~\ref{sec:nondiscrimination}).
One could attempt to encode each approach into the other, but much of
the intuition would be lost even if successful.

\paragraph{Privacy and discrimination are different.}
Numerous works have instead looked at how privacy and nondiscrimination are not
the same.  Dwork and Mulligan write that approaching problems of discrimination
with the tools of privacy might just hide the
discrimination~\cite{dwork13slawr}.  Alan and Starr provide a concrete example:
the nondisclosure of criminal history appears to put pressure on employees to
discriminate by race as a proxy for the missing history~\cite{agan16tr}.
Strahilevitz also considers how having more information can reduce the desire
to discriminate~\cite{strahilevitz2008privacy}.

\paragraph{Privacy and causation.}
Tschantz~et~al.\@ are the first to provide a formal correspondence between
causation and information flow~\cite{tschantz15csf},
a result with implications for security, privacy, and nondiscrimination.
Others had
previously noted their relationship~\cite{mclean90sp,mowbray92csf,sewell00csf}.
This paper follows on the current authors' recent application of a %
similar correspondence to differential privacy~\cite{tschantz17arxiv}, which we
believe is the first explicit use of causal reasoning in privacy research.  This
viewpoint should not be confused with works that provide algorithms for causal
inference while providing differential privacy, such as
Kusner~et~al.'s~\cite{kusner16aistats}.

While we consider differential privacy
(at least when using the centralized model of data collection)
to be an instance of \emph{use privacy}, the
study of use privacy per se appears rather young.  A recent PCAST report has
called for more emphasis on the appropriate use of data, instead of banning
its collection~\cite{pcast14}.  Datta~et~al.\@ consider use
privacy and proxies~\cite{datta2017use}.

\paragraph{Privacy and association.}
The difference between associative privacy and differential (casual) privacy has received much discussion,
with some arguing for associative definitions (e.g.,~\cite{dalenius77statistik,kifer11sigmod,kifer12pods,kifer14database,he14sigmod,chen14vldbj,zhu15tifs,liu16ndss};
see~\cite{shannon49bell} for its antecedent in security) and others for differential ones (e.g.,~\cite{warner65asa,dwork06crypto,dwork06icalp,bassily13focs,kasiviswanathan14jpc,mcsherry16blog1,mcsherry16blog2}).

While associative properties have a more straightforward connection to providing inferential privacy, that is, limiting the inferences of an adversary from the data released,
differential privacy can also be viewed as limiting inferences.
In particular, Kasiviswanathan and Smith provide theorems showing that differential privacy implies a form of inferential privacy
that limits how much more an adversary can learn with an additional row in the database~\cite{kasiviswanathan14jpc}.

Others have looked implications of differential privacy about what the
adversary can learn from the released data in total given assumptions
about the adversary's knowledge~\cite{ghosh16inferential} or the
data~\cite{alivim11icalp}.
Others have looked at the relationship between differential privacy
and mutual information~\cite{cuff16ccs,mcsherry17blog1}.

\paragraph{Discrimination and causation.}
The importance of causal reasoning in nondiscrimination goes back to at least Pearl, who uses it to deal with how Simpson's paradox can make it unclear which if either of the two groups is discriminated against~\cite{pearl09book}.
More recently, Hardt~et~al.\@ re-examined this issue for a broad class of nondiscrimination definitions~\cite{hardt16arxiv,hardt16nips}.
Kilbertus~et~al.\@ propose a way of looking at this issue using causal reasoning about ``proxy'' and ``resolving'' variables that are either prohibited or allowed for use~\cite{kilbertus2017causal}.
Other causal notions of fairness have been put forward by
Kusner~et~al.~\cite{kusner2017counterfactual},
Bonchi~et~al.~\cite{bonchi2017exposing},
and Cowgill~et~al.~\cite{cowgill2017algorithmic}.

\paragraph{Discrimination and association.}
Associative notions of disparate impact have been the basis for a number
of mechanisms for enforcing nondiscrimination in statistical systems~\cite{calders10dmkd,kamishima2011, zemel13icml, feldman15kdd}. Recently a number
of richer associative notions of nondiscrimination have been proposed~\cite{hardt16nips,kleinberg17itcs,chouldechova16fatml} that take into account three variables, group membership, predicted outcome, and true outcome, in various combinations.
We believe our connection between privacy and nondiscrimination can be
extended from the simple associative notions to these more complex
ones, but leave it as future work.

\paragraph{Contrasting definitions in other ways.}
In this work, we focus on which distributions privacy and
nondiscrimination definitions compare, that is, on what we call the
\emph{point of comparison} found in the definition.
Mironov instead contrasts privacy definitions based upon how different
definitions use different methods of doing the
comparisons~\cite{mironov17csf}.

\section{Probabilistic notions of dependence}
\label{sec:inf-flow}

Before turning to privacy or nondiscrimination, we consider some general probabilistic notions that apply to both norms.
In particular, we look at properties that
compare probability distributions, frequentist or Bayesian, across two
worlds. In this section, we define a number of such properties, and prove
relationships between them.  In later sections, we instantiate
these properties to obtain privacy and nondiscrimination notions,
as well as theorems connecting these notions.

We first examine the simple case of properties for deterministic systems, and
 then, in
Section~\ref{sec:props:approx}, we examine definitions with approximate
guarantees for randomized systems.
We will typically state the properties as applied to our setting, modeled by $SE$,
even when the properties are more generally applicable to any system
of structural equations.
Table~\ref{tab:defs-summary} summarizes the properties we consider
and Figure~\ref{fig:relationships} summarizes the relationships
between them.

\begin{table*}
\centering
  \begin{tab}{llllcc}
Notion & Point of comparison                                      & Det. & AR \\ %
\midrule
Noninterference &    $\prog(\changes{h}, l)$                                        & \ref{dfn:ni} & \ref{dfn:ar-dc} \\ %
Associative independence &    $\Pr[\inter{O}\equals o \given X\equals \changes{x}, SE, B]$  & \ref{dfn:det-assoc}  & \ref{dfn:ar-assoc} \\ %
Associative independence &    $\Pr[X{=}x \given \underline{O{=}o}, SE, B]$                  &  & \ref{dfn:ar-assoc-X} \\
Causal irrelevance       &    $\Pr[\inter{O}\equals o \given \isub{X}{\changes{x}}SE, B]$   & \ref{dfn:det-causal}  & \ref{dfn:ar-causal}  \\ %
\end{tab}
  \caption{Summary of definitions. For each information flow property, we present
  a version for deterministic systems (Det.\@) and an approximate version for randomized systems (AR).
The point of comparison is the quantity computed twice, once for two different values, and compared to check whether they are equal to one another.
The check is for all pairs of values $s$ and $s'$ that can go in $\changes{s}$ (or $x$ and $x'$ for $\changes{x}$).}
  \label{tab:defs-summary}
\end{table*}

\begin{figure}
\centering
\begin{tikzcd}[math mode=false, column sep=5em, row sep=3em]%
                        & Det.                                                                 & AR \\
Associative independence &                                              & Def.~\ref{dfn:ar-assoc-X}  \\
Associative independence & Def.~\ref{dfn:det-assoc}\arrow[r]            & Def.~\ref{dfn:ar-assoc} \arrow[u, rightarrow, shift left=1ex, "Thm.~\ref{thm:assoc-to-assoc-X}"] \arrow[u, leftarrow, shift right=1ex, "Thm.~\ref{thm:assoc-X-to-assoc} $2\epsilon$"']\\
Causal irrelevance      & Def.~\ref{dfn:det-causal}\arrow[r]\arrow[u, leftrightarrow, "Thm.~\ref{thm:independence-intervention} $A \bot X \given B$"'] & Def.~\ref{dfn:ar-causal}\arrow[u, leftrightarrow, "Thm.~\ref{thm:epsilon-causal-epi-irr} $A \bot X \given B$"'] \\
Noninterference        & Def.~\ref{dfn:ni}\arrow[r]\arrow[u, leftrightarrow, "Thm.~\ref{thm:pr-ni}"']%
& Def.~\ref{dfn:ar-dc}\arrow[u, leftrightarrow, "Thm.~\ref{thm:diff-causal-irr}"']\\
\end{tikzcd}
\caption{Relationships between definitions.  Notable assumptions made by theorems and dilution of the privacy budget are shown as subscripts.  A deterministic form of the upper associative independence definition exists, but we do not consider it.}
\label{fig:relationships}
\end{figure}
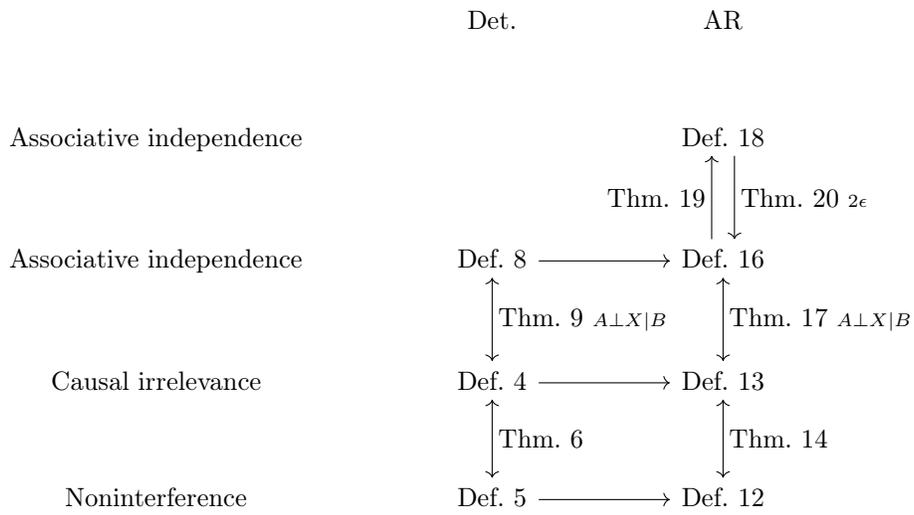

\subsection{The deterministic perfect case}
\label{sec:props:perfect}

The first property we consider is causal irrelevance.

\begin{dfn}[Pearl~\cite{pearl09book}]
  \label{dfn:det-causal}
  A system $SE = \{\inter{O}{=}\prog(\inter{X},\inter{A}), \inter{X}{=}X, \inter{A}{=}A\}$  has \emph{causal irrelevance} with respect to $\inter{X}$ to $\inter{O}$ for $B$ iff for all $x_1$, $x_2$, and $o$,
  \[\Pr[\inter{O}{=}o\given\isub{X}{x_1}SE, B] = \Pr[\inter{O}{=}o\given\isub{X}{x_2}SE, B]\]
\end{dfn}

For example, consider an experiment that randomly assigns a treatment of $x_1$ or $x_2$ to a
population modeled by $B$.
The causal irrelevance property states that the outcomes are
identical irrespective of the treatment.

For systems with control over all inputs, one approach for achieving causal
irrelevance is to enforce \emph{noninterference}.
Here we state a simplification of Goguen and Meseguer's
definition~\cite{gm82security}.

\begin{dfn}
  \label{dfn:ni}
  A function $\prog(\inter{X}, \inter{A})$ has noninterference for $\inter{X}$ iff for all $x_1$, $x_2$, and $a$.
  \[\prog(x_1, a) = \prog(x_2, a)\]
\end{dfn}

Noninterference yields \emph{causal irrelevance} for all backgrounds $B$.

\begin{thm}\label{thm:pr-ni}
  Consider a system $SE = \{\inter{O}{=}\prog(\inter{X},\inter{A}), \inter{X}{=}X, \inter{A}{=}A\}$. If $\prog$ has noninterference with respect to $\inter{X}$, then for all $B$ it has causal irrelevance with respect to $\inter{X}$.
\end{thm}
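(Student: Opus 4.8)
The plan is to unfold both sides of the causal-irrelevance equation using the lifting formula for recursive models from Section~\ref{sec:pearl} and then collapse them to a common expression by a single appeal to noninterference.

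First I would fix arbitrary $x_1$, $x_2$, $o$, and background distribution $B$, and write each intervened probability as a sum over assignments $u$ to the background variables $U = \{X, A\}$:
\[
\Pr[\inter{O}{=}o \given \isub{X}{x_i}SE, B] = \sum_{\{u \mid M_{\inter{X}=x_i}.\inter{O}(u){=}o\}} B(u).
\]
Next I would read off the intervened model. Replacing the equation for $\inter{X}$ by $\inter{X}{=}x_i$ leaves $\inter{A}{=}A$ and $\inter{O}{=}\prog(\inter{X},\inter{A})$ intact, so that $M_{\inter{X}=x_i}.\inter{O}(u) = \prog(x_i, a)$, where $a$ is the value that $u$ assigns to the background variable $A$. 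The key observation here is that this output value depends on $u$ only through $a$, and not at all on the value $u$ assigns to the background variable $X$, since the intervention has severed $\inter{O}$'s dependence on it.

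The crux is then noninterference. Because $\prog(x_1, a) = \prog(x_2, a)$ for every $a$ (Definition~\ref{dfn:ni}), the predicate $\prog(x_i, a){=}o$ picks out exactly the same set of background assignments $u$ for $i{=}1$ and $i{=}2$. Hence the two sums above range over identical index sets and weight each term by the same $B(u)$, so they are equal. As $x_1$, $x_2$, $o$, and $B$ were arbitrary, this is precisely causal irrelevance with respect to $\inter{X}$ for every $B$ (Definition~\ref{dfn:det-causal}).

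I expect the only delicate point, rather than a genuine obstacle, to be the bookkeeping around the intervention: one must be careful that intervening on the endogenous input $\inter{X}$ disconnects $\inter{O}$ from the background variable $X$, so that $B$ enters the computation solely through its marginal on $A$, and that noninterference makes the two maps $\prog(x_1,\cdot)$ and $\prog(x_2,\cdot)$ literally coincide as functions of $a$. Once this is stated cleanly, the equality of the two sums is immediate.
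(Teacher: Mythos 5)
Your proof is correct, but it takes a genuinely different route from the paper's. The paper does not compare the two interventions head-on; instead it proves a stronger intermediate result (Lemma~\ref{lemma:pr-ni}): under noninterference, for every $x$ the intervened distribution equals the purely observational one, $\Pr[\inter{O}{=}o\given\isub{X}{x}SE, B] = \Pr[\inter{O}{=}o \given SE, B]$, and the theorem follows because both $\isub{X}{x_1}SE$ and $\isub{X}{x_2}SE$ are anchored to this common quantity. That derivation is syntactic: it works inside the $\Pr[\cdot \given \cdot, SE, B]$ notation by a chain of substitutions (replace $\prog(x,\inter{A})$ by $\prog(X,\inter{A})$ via noninterference, reduce to a statement about background variables only, then swap structural equations), and it explicitly invokes Assumption~\ref{assn:se-independence} to justify exchanging $\isub{X}{x}SE$ for $SE$. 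Your argument is instead semantic: you unfold both counterfactual probabilities as sums over background assignments via the lifting formula, observe that in the intervened model $\inter{O}$ computes to $\prog(x_i, a)$ independently of the $X$-component of $u$, and let noninterference collapse the two index sets. In that framework the SE-independence bookkeeping never arises, because an intervention changes only the model $M$ while $B$ stays fixed by definition. What each buys: your route is more elementary and self-contained, needing only Pearl's lifting semantics; the paper's route yields the strictly stronger fact that intervention coincides with observation, which mirrors the structure of the approximate case (Corollary~\ref{lemma:diff-pr-one-sided}) and is of independent interest.
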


This theorem follows directly from the following lemma.

\begin{lem}
  \label{lemma:pr-ni}
  Consider a system $SE = \{\inter{O}{=}\prog(\inter{X},\inter{A}), \inter{X}{=}X, \inter{A}{=}A\}$. If $\prog$ has noninterference with respect to $\inter{X}$, then for all $B$ and $x$,
  \[\Pr[\inter{O}{=}o\given\isub{X}{x}SE, B] = \Pr[\inter{O}{=}o \given SE, B]\]
\end{lem}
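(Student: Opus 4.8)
The plan is to expand both sides of the claimed equality using the lifting formula for endogenous variables, $\Pr[Y{=}y \given B] = \sum_{\{u \mid M.Y(u) = y\}} B(u)$, and then argue that the two resulting sums range over exactly the same set of background assignments. Concretely, I would fix $B$, $x$, and $o$, let $M$ denote the model determined by $SE$, and let $M'$ denote the intervened model determined by $\isub{X}{x}SE$. Since $\inter{O}$ is endogenous, each side is a sum of $B(u)$ over the background assignments $u$ (i.e., joint values of the background variables $X$ and $A$) for which the respective model computes $\inter{O} = o$.

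First I would compute the value $\inter{O}$ takes in each model as a function of a background assignment $u = (x_0, a)$. In $M$, the structural equations give $\inter{X} = X = x_0$, $\inter{A} = A = a$, and hence $\inter{O} = \prog(\inter{X}, \inter{A}) = \prog(x_0, a)$, so $M.\inter{O}(u) = \prog(x_0, a)$. In $M'$, the intervention replaces $\inter{X} = X$ by $\inter{X} = x$, leaving $\inter{A} = a$ and $\inter{O} = \prog(x, a)$, so $M'.\inter{O}(u) = \prog(x, a)$, now independent of the (disconnected) background value $x_0$.

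The key step is to invoke noninterference (Def.~\ref{dfn:ni}), which supplies $\prog(x_0, a) = \prog(x, a)$ for every $x_0$, $x$, and $a$. This gives $M.\inter{O}(u) = M'.\inter{O}(u)$ for every background assignment $u$, so the two preimages $\{u \mid M.\inter{O}(u) = o\}$ and $\{u \mid M'.\inter{O}(u) = o\}$ are literally the same set. Summing the common $B$ over identical index sets yields identical values, which is precisely the claimed equality.

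I do not anticipate a serious obstacle; the only point requiring care is the intervention semantics, namely that after replacing the equation for $\inter{X}$ the background variable $X$ remains in the probability space but no longer influences $\inter{O}$. Tracking this explicitly (summing over $u = (x_0, a)$ in both cases and observing that $x_0$ drops out of the intervened computation) keeps the argument transparent and makes the single application of noninterference the only nontrivial move.
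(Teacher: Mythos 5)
Your proof is correct, but it takes a genuinely different route from the paper's. You argue semantically: expand both sides via Pearl's lifting formula, compute $M.\inter{O}(u)$ and $M'.\inter{O}(u)$ as functions of the background assignment $u=(x_0,a)$, and let noninterference collapse the two preimage sets $\{u \mid M.\inter{O}(u)=o\}$ and $\{u \mid M'.\inter{O}(u)=o\}$ into one. The paper instead runs a syntactic equational derivation entirely inside its conditioning notation: it expands $\isub{X}{x}SE$ as explicit conditions, substitutes to obtain $\prog(x,\inter{A})$, applies noninterference to replace the constant $x$ by the random variable $X$, rewrites the query as the pure background event $\prog(X,A){=}o$, and only then invokes SE-independence (Assumption~\ref{assn:se-independence}) to swap the intervened equations back for the original ones. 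One point deserves attention in your version: the step ``summing the common $B$ over identical index sets'' silently assumes the background measure is unchanged by the intervention. In Pearl's model semantics this is automatic (interventions never touch $B$), which is why your argument never needs Assumption~\ref{assn:se-independence}; but in the paper's formalism, where $SE$ sits in conditioning position, that same fact is exactly what SE-independence licenses, and the paper must cite it explicitly. Your approach buys transparency and economy---a single application of noninterference and nothing else; the paper's approach buys uniformity with the rest of its development, since the same conditioning-style moves (substitution, SE-independence) recur in Lemma~\ref{lemma:independence-intervention} and Theorem~\ref{thm:diff-causal-irr}, where the background distribution genuinely interacts with conditioning on $X{=}x$ and a purely semantic preimage argument would no longer suffice on its own.
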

\begin{proof}
  Assume $\prog(X, A)$ has noninterference with respect to $A$. Then,
  \begin{align*}
     & \Pr[\inter{O}{=}o\given\isub{X}{x}SE, B]\\
     =& \Pr[\inter{O}{=}o\given\inter{O}{=}\prog(\inter{X},\inter{A}), \inter{X}{=}x, \inter{A}{=}A, B]\tag{expanding $\isub{X}{x}SE$}\\
     =& \Pr[\inter{O}{=}o\given\inter{O}{=}\prog(x,\inter{A}), \inter{X}{=}x, \inter{A}{=}A, B]\tag{substitution}\\
     =& \Pr[\inter{O}{=}o\given\inter{O}{=}\prog(X,\inter{A}), \inter{X}{=}x, \inter{A}{=}A, B]\tag{by noninterference}\\
     =& \Pr[\prog(X,A){=}o\given\inter{O}{=}\prog(X,\inter{A}), \inter{X}{=}x, \inter{A}{=}A, B]\tag{substitution}\\
     =& \Pr[\prog(X,A){=}o\given\inter{O}{=}\prog(X,\inter{A}), \inter{X}{=}X, \inter{A}{=}A, B]\tag{SE-independence}\\
     =& \Pr[\inter{O}{=}o\given\inter{O}{=}\prog(\inter{X},\inter{A}), \inter{X}{=}X, \inter{A}{=}A, B]\tag{substitution}\\
     =& \Pr[\inter{O}{=}o\given SE, B]
  \end{align*}
\end{proof}

The second kind of property that we consider is associative independence.

\begin{dfn}
  \label{dfn:det-assoc}
  A system $SE = \{\inter{O}{=}\prog(\inter{X},\inter{A}), \inter{X}{=}X, \inter{A}{=}A\}$  has associative independence with respect to $B$ iff for each $x_1$, $x_2$, and $o$ such that $\Pr[X{=}x_1 \given SE, B] > 0$ and $\Pr[X{=}x_2 \given SE, B] > 0$,
  \[\Pr[\inter{O}{=}o\given X{=}x_1, SE, B] = \Pr[\inter{O}{=}o\given X{=}x_2, SE, B]\]
\end{dfn}

This is identical to stating that $\inter{O} \bot X ~\mid~SE, B$, that is, $\inter{O}$ and $X$ are conditionally independent with respect to $B, SE$.

Next, we show that if an input is independent of other inputs, then the causal irrelevance of that input is equivalent to the associative independence of that input.

\begin{thm}\label{thm:independence-intervention}
Consider a system $SE = \{\inter{O}{=}\prog(\inter{X},\inter{A}), \inter{X}{=}X, \inter{A}{=}A\}$. If $A \bot X \given B$ and for all $x$, $\Pr[X = x \given B] > 0$, then $SE$ has associative independence with respect to $\inter{X}$ for $B$ iff $SE$ has probabilistic causal irrelevance with respect to $X$ for $B$.
\end{thm}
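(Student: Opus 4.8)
The plan is to prove a single pointwise identity that collapses both properties onto the same family of quantities, after which the biconditional is immediate. Specifically, I would first show that under the hypotheses $A \bot X \given B$ and $\Pr[X{=}x \given B] > 0$ for all $x$, the associative and causal points of comparison agree term by term: for every $x$ and $o$,
\[
\Pr[\inter{O}{=}o \given X{=}x, SE, B] = \Pr[\inter{O}{=}o \given \isub{X}{x}SE, B].
\]
Given this, associative independence (the left-hand side being constant across $x$) and causal irrelevance (the right-hand side being constant across $x$) are literally the same requirement, so each direction of the \emph{iff} holds trivially.

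To establish the identity, I would expand both sides as sums over the values $a$ of the background variable $A$, using the structural equations together with the lifting formula. On the causal side, the intervention $\isub{X}{x}SE$ replaces $\inter{X}{=}X$ by $\inter{X}{=}x$ while keeping $\inter{A}{=}A$ and $\inter{O}{=}\prog(\inter{X},\inter{A})$, so $\inter{O}$ resolves to $\prog(x, A)$ and
\[
\Pr[\inter{O}{=}o \given \isub{X}{x}SE, B] = \sum_{a} \Indicator[\prog(x,a){=}o]\,\Pr[A{=}a \given B].
\]
On the associative side, conditioning on $X{=}x$ forces $\inter{X}{=}X{=}x$ through the structural equation, again yielding $\inter{O}{=}\prog(x, A)$, but now with $A$ weighted by the conditional distribution:
\[
\Pr[\inter{O}{=}o \given X{=}x, SE, B] = \sum_{a} \Indicator[\prog(x,a){=}o]\,\Pr[A{=}a \given X{=}x, B].
\]
These two sums differ only in the weighting distribution, and the hypothesis $A \bot X \given B$ gives $\Pr[A{=}a \given X{=}x, B] = \Pr[A{=}a \given B]$ for each $a$, so they coincide. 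The substitution and SE-independence manipulations that resolve $\inter{O}$, $\inter{X}$, and $\inter{A}$ against the structural equations are exactly those already used in Lemma~\ref{lemma:pr-ni}, so I would reuse that style of reasoning rather than redo it.

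Two bookkeeping points deserve care. First, associative independence quantifies only over $x_1, x_2$ with $\Pr[X{=}x_i \given SE, B] > 0$; since $X{=}x$ is a statement about background variables, SE-independence (Assumption~\ref{assn:se-independence}) gives $\Pr[X{=}x \given SE, B] = \Pr[X{=}x \given B]$, which the positivity hypothesis makes nonzero for every $x$. Hence that quantification ranges over all values and lines up with the unrestricted quantification in causal irrelevance (Definition~\ref{dfn:det-causal}). Second, the main obstacle is the middle step: justifying the expansion of each probability into a sum over $A$ weighted by the appropriate distribution, and in particular ensuring that in the conditioning case the value $X{=}x$ is correctly propagated into $\inter{X}$ (and thence into $\prog$) \emph{before} the independence assumption is applied. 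Once the pointwise identity is in hand, everything else is routine.
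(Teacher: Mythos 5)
Your proposal is correct and takes essentially the same route as the paper: the paper proves this theorem as an immediate consequence of Lemma~\ref{lemma:independence-intervention}, which is precisely your pointwise identity $\Pr[\inter{O}{=}o \given X{=}x, SE, B] = \Pr[\inter{O}{=}o \given \isub{X}{x}SE, B]$, established there by the same ingredients — structural-equation substitution, SE-independence, and the hypothesis $A \bot X \given B$ — with the paper writing a chain of substitutions where you write an explicit sum over $a$, a purely presentational difference. Your bookkeeping showing that the positivity hypothesis makes the quantifier ranges of the two definitions coincide is a detail the paper leaves implicit, but it does not alter the argument.
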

\begin{proof}
  Follows directly from Lemma~\ref{lemma:independence-intervention}.
\end{proof}

\begin{lem}
  \label{lemma:independence-intervention}
  Consider a system $SE = \{\inter{O}{=}\prog(\inter{X},\inter{A}), \inter{X}{=}X, \inter{A}{=}A\}$. If $X \bot A \given B$, and $\Pr[X = x \given SE, B] > 0$,
  \[\Pr[\inter{O}\equals o \given X\equals x, SE, B] = \Pr[\inter{O}\equals o \given \isub{X}{x}SE, B]\]
\end{lem}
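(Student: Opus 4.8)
The plan is to compute both sides explicitly by marginalizing over the background variable $A$ and to show that each reduces to the same quantity $\sum_{a : \prog(x,a)=o} \Pr[A{=}a \given B]$. On both sides the starting point is the structural equation $\inter{O}{=}\prog(\inter{X}, \inter{A})$ together with $\inter{A}{=}A$; the two sides differ only in how $\inter{X}$ gets fixed to $x$, and the entire content of the lemma is that these two ways of fixing $\inter{X}$ agree once $X \bot A \given B$.

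For the associative (left-hand) side, I would first use $\inter{X}{=}X$ and the conditioning event $X{=}x$ to replace $\inter{X}$ by $x$ inside $\prog$, so that $\inter{O}{=}\prog(x, A)$, and then expand over the values of $A$: $\Pr[\inter{O}{=}o \given X{=}x, SE, B] = \sum_{a : \prog(x,a)=o} \Pr[A{=}a \given X{=}x, SE, B]$. The next step is to simplify the weight $\Pr[A{=}a \given X{=}x, SE, B]$. Since both $A{=}a$ and $X{=}x$ are events over background variables, writing this conditional as a ratio of joint and marginal probabilities and applying SE-independence (Assumption~\ref{assn:se-independence}) to numerator and denominator lets me drop $SE$, giving $\Pr[A{=}a \given X{=}x, B]$. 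The independence hypothesis $X \bot A \given B$ then replaces this with $\Pr[A{=}a \given B]$, so the left-hand side equals $\sum_{a : \prog(x,a)=o} \Pr[A{=}a \given B]$. The positivity assumption $\Pr[X{=}x \given SE, B] > 0$ is what makes the conditioning, and hence this ratio, well-defined.

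For the causal (right-hand) side, the intervention $\isub{X}{x}SE$ replaces the equation $\inter{X}{=}X$ by $\inter{X}{=}x$, so that in the intervened model $\inter{O}{=}\prog(x, A)$ directly, with no reference to $X$. Expanding over $A$ again gives $\Pr[\inter{O}{=}o \given \isub{X}{x}SE, B] = \sum_{a : \prog(x,a)=o} \Pr[A{=}a \given \isub{X}{x}SE, B]$. Because $A$ is a background variable and the intervention alters only the equation for $\inter{X}$, SE-independence (now applied with the intervened equations in the role of $SE'$) yields $\Pr[A{=}a \given \isub{X}{x}SE, B] = \Pr[A{=}a \given B]$, so the right-hand side also equals $\sum_{a : \prog(x,a)=o} \Pr[A{=}a \given B]$. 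Matching the two sums finishes the argument.

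The step I expect to be the crux is simplifying the weight on the associative side: conditioning on $X{=}x$ could in general shift the distribution of $A$ (classic confounding), and it is exactly the hypothesis $X \bot A \given B$ that forbids this and forces conditioning to agree with intervention. The intervention side carries no analogous difficulty, since severing $\inter{X}$ from $X$ leaves the background law of $A$ untouched; the only thing to verify there is that the intervention does not disturb $A$'s distribution, which is immediate from SE-independence together with the fact that the intervention rewrites only $\inter{X}$'s equation.
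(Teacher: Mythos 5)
Your proposal is correct and takes essentially the same approach as the paper's own proof: both use the structural equations to turn $\inter{O}{=}o$ into the background event $\prog(x,A){=}o$, use SE-independence to detach that background event from whichever equation set is being conditioned on (original, intervened, or none), and use $X \bot A \given B$ to eliminate the conditioning on $X{=}x$. The only difference is presentational—you expand into the sum $\sum_{a : \prog(x,a)=o} \Pr[A{=}a \given B]$ and reduce both sides to it, whereas the paper chains equalities directly from the associative side to the interventional side without marginalizing.
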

\begin{proof}
Assume that  $X \bot A \given B$.
Then,
\begin{align}
   \Pr[A\equals a \given B] &= \Pr[A\equals a \given B, X\equals x]\\
   \Pr[A\equals a \given SE, B] &= \Pr[A\equals a \given X\equals x, SE, B]\label{eqn:se-ind}
\end{align}
where \eqref{eqn:se-ind} follows from SE-independence.
Thus,
\begin{align*}
   & \Pr[\inter{O}\equals o \given X\equals x, SE, B]\\
   =& \Pr[\prog(x, A)\equals o \given X\equals x, SE, B] \tag{substituting for $\inter{X},\inter{O}$}\\
   =& \Pr[\prog(x, A)\equals o \given SE, B]\tag{from \eqref{eqn:se-ind} above}\\
   =& \Pr[\prog(x, A)\equals o \given \isub{X}{x}SE, B] \tag{SE-independence}\\
  =& \Pr[\inter{O}\equals o \given \isub{X}{x}SE, B]
\end{align*}
\end{proof}

\subsection{Making things approximate}
\label{sec:props:approx}

We now consider the case of systems with internal randomness.
We model such systems as $\{\inter{O}{=}\prog(\inter{X},\inter{A},R), \inter{X}{=}X, \inter{A}{=}A\}$, where
$R$ is a background variable that represents fresh randomness.

\begin{dfn}\label{dfn:fresh}
  A random variable $R$ with distribution $B_R$ is said to be \emph{fresh} for $B$ iff for all $\varphi$ that does not reference $R$ and all $r$, $\Fr[R\equals r \given B_R] = \Pr[R\equals r \given \varphi, B]$.
\end{dfn}
The left hand side represents the true frequency distribution of a random variable $R$ according to $B_R$.
The freshness condition can be instantiated for both frequentist and Bayesian probabilities. For
frequentist probabilities, this condition can be interpreted as $R$ being uncorrelated with any other background variables.
For Bayesian probabilities this condition can be interpreted as the agent knowing nothing about $R$ apart from its natural frequentist distribution $B_R$.
Since $R$ has the same distribution under $B_R$ and $B$ when it is fresh, we will not mention $B_R$ when $R$ is fresh for $B$.

Also, instead of requiring equalities, we allow the probabilities above to be
approximately equal, arriving at a version of differential
privacy for functions.

\begin{dfn}\label{dfn:ar-dc}
  A function $\prog(X, A, R)$ has $\epsilon$-noninterference for $X$ given a distribution $B$ over $R$ iff for all $x_1$, $x_2$, and $a$,
  \[\Fr[\prog(x_1, a, R){=}o \given B] \leq e^{\epsilon} \Fr[\prog(x_2, a, R){=}o \given B]\]
\end{dfn}
Note that the only role $B$ plays in Definition~\ref{dfn:ar-dc} is assigning a probability distribution to the randomization within $\prog$ provided by $R$.

\begin{dfn}
  \label{dfn:ar-causal}
  For a system $SE = \{\inter{O}{=}\prog(\inter{X},\inter{A},R), \inter{X}{=}X, \inter{A}{=}A\}$, $X$ has $\epsilon$-probabilistic causal irrelevance for $O$ with respect to $B$ iff for all $x_1$, $x_2$, and $o$,
  \[ \Pr[\inter{O}{=}o\given\isub{X}{x_1}SE, B] \leq e^\epsilon \Pr[\inter{O}{=}o\given\isub{X}{x_2}SE, B]\]
\end{dfn}

\begin{thm}
  \label{thm:diff-causal-irr}
  If $\prog(X, A, R)$ has $\epsilon$-noninterference, then for each $B$ such that $R$ is fresh, $X$ has $\epsilon$-probabilistic causal irrelevance for $O$ with respect to $B$, given $SE = \{\inter{O}{=}\prog(\inter{X},\inter{A},R), \inter{X}{=}X, \inter{A}{=}A\}$.
\end{thm}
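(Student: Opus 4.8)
The plan is to reduce the interventional quantity to a convex combination of the function-level frequencies that appear in $\epsilon$-noninterference, and then apply noninterference termwise. First I would fix $x$ and $o$ and establish the key identity
\[\Pr[\inter{O}\equals o\given\isub{X}{x}SE, B] = \sum_a \Fr[\prog(x, a, R)\equals o \given B]\,\Pr[A\equals a \given B].\]
To get this, substitute the intervened equations $\inter{X}\equals x$, $\inter{A}\equals A$, and $\inter{O}\equals \prog(x, A, R)$, so the left-hand side becomes $\Pr[\prog(x, A, R)\equals o \given \isub{X}{x}SE, B]$, then split on the value of the background variable $A$ by the law of total probability. Because the intervention on $\inter{X}$ does not touch the ancestor $A$, the weight $\Pr[A\equals a \given \isub{X}{x}SE, B]$ equals $\Pr[A\equals a \given B]$ (by SE-independence, Assumption~\ref{assn:se-independence}), and, crucially, it does not depend on $x$.

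The heart of the argument --- and the step I expect to be the main obstacle --- is showing that the conditional factor collapses to the purely frequentist term, namely
\[\Pr[\prog(x, a, R)\equals o \given A\equals a, \isub{X}{x}SE, B] = \Fr[\prog(x, a, R)\equals o \given B].\]
Once $A\equals a$ is fixed, $\prog(x, a, R)$ is a function of the fresh variable $R$ alone, so I would expand over the values $r$, use SE-independence to drop the structural-equation and intervention conditioning from the background-only event $R\equals r$, and then invoke freshness (Definition~\ref{dfn:fresh}) with $\varphi$ taken to be $A\equals a$ to replace $\Pr[R\equals r \given A\equals a, B]$ by $\Fr[R\equals r \given B_R]$; resumming over $r$ recovers $\Fr[\prog(x, a, R)\equals o \given B]$. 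The subtlety to handle carefully is that $SE$ nominally mentions $R$ through the equation $\inter{O}\equals\prog(\inter{X},\inter{A},R)$, which is why freshness must be applied only \emph{after} SE-independence has stripped the structural equations, leaving a conditioning event ($A\equals a$) that genuinely does not reference $R$.

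With the identity in hand the conclusion is immediate: applying it at $x_1$ and $x_2$, using $\epsilon$-noninterference (Definition~\ref{dfn:ar-dc}) to bound $\Fr[\prog(x_1, a, R)\equals o \given B] \leq e^{\epsilon}\Fr[\prog(x_2, a, R)\equals o \given B]$ for every $a$, and multiplying by the nonnegative, $x$-independent weights $\Pr[A\equals a \given B]$ gives
\begin{align*}
\Pr[\inter{O}\equals o\given\isub{X}{x_1}SE, B] &= \sum_a \Fr[\prog(x_1, a, R)\equals o \given B]\,\Pr[A\equals a \given B]\\
&\leq e^{\epsilon}\sum_a \Fr[\prog(x_2, a, R)\equals o \given B]\,\Pr[A\equals a \given B]\\
&= e^{\epsilon}\Pr[\inter{O}\equals o\given\isub{X}{x_2}SE, B],
\end{align*}
which is exactly $\epsilon$-probabilistic causal irrelevance (Definition~\ref{dfn:ar-causal}). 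It is worth noting that, unlike Theorem~\ref{thm:independence-intervention}, no independence assumption between $A$ and $X$ is needed here: since we compare two \emph{interventions} rather than two conditionings, the $A$-weights are identical on both sides and factor out of the inequality automatically.
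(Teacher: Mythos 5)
Your proposal is correct and follows essentially the same route as the paper's own proof: decompose the interventional probability over values of the background variable $A$, use SE-independence/exogeneity to make the weights $\Pr[A\equals a \given B]$ intervention-independent, strip the structural equations before invoking freshness of $R$ to collapse to $\Fr[\prog(x,a,R)\equals o \given B]$, and then apply $\epsilon$-noninterference termwise. The paper writes this as a single chain of (in)equalities from $x_1$ to $x_2$ rather than isolating the identity as a lemma, but the decomposition, the ordering of SE-independence before freshness, and the termwise application of noninterference are all the same.
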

\begin{proof} Assume $\prog$ has $\epsilon$-noninterference. Therefore,
  \begin{align*}
    &\Pr[\inter{O}{=}o\given\isub{X}{x_1}SE, B]\\
    &= \sum_a \Pr[\inter{O}{=}o\given A\equals a, \isub{X}{x_1}SE, B]\Pr[A\equals a\given \isub{X}{x_1}SE, B] \\
    &= \sum_a \Pr[\inter{O}{=}o\given A\equals a, \isub{X}{x_1}SE, B]\Pr[A\equals a\given B] \tag{exogeneity}\\
    &= \sum_a \Pr[\prog(x_1, a, R){=}o\given A\equals a, \isub{X}{x_1}SE, B]\Pr[A\equals a\given B] \\
    &= \sum_a \Fr[\prog(x_1, a, R){=}o \given A\equals a, B]\Pr[A\equals a\given B] \tag{SE-independence} \\
    &= \sum_a \Fr[\prog(x_1, a, R){=}o \given B]\Pr[A\equals a\given B] \tag{$R$ fresh} \\
    &\leq \sum_a e^{\epsilon} \Fr[\prog(x_2, a, R){=}o \given B]\Pr[A\equals a\given B] \tag{noninterference of $R$ and $B$} \\
    &= e^{\epsilon} \sum_a \Pr[\prog(x_2, a, R){=}o\given A\equals a, B]\Pr[A\equals a\given B] \tag{$R$ fresh}\\
    &= e^{\epsilon} \sum_a \Pr[\prog(x_2, a, R){=}o\given A\equals a, \isub{X}{x_1}SE, B]\Pr[A\equals a\given B] \tag{SE-independence}\\
    &= e^{\epsilon} \sum_a \Pr[\prog(x_2, a, R){=}o\given A\equals a, \isub{X}{x_2}SE, B]\Pr[A\equals a\given B] \\
    &= e^{\epsilon} \Pr[\inter{O}{=}o\given\isub{X}{x_2}SE, B]\\
  \end{align*}
\end{proof}

\begin{cor}
  \label{lemma:diff-pr-one-sided}
  If $\prog(X, A, R)$ has $\epsilon$-noninterference, then for all $B$ such that $R$ is fresh, for all $x$ and $o$,
  \[\Pr[\inter{O}{=}o\given SE, B] \leq e^\epsilon \Pr[\inter{O}{=}o\given\isub{X}{x}SE, B], \text{ and}\]
  \[\Pr[\inter{O}{=}o\given\isub{X}{x}SE, B] \leq e^\epsilon  \Pr[\inter{O}{=}o\given SE, B]\]
\end{cor}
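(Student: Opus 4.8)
The plan is to reduce both sides of each inequality to weighted averages of the pure frequency terms $\Fr[\prog(x',a,R){=}o\given B]$ that $\epsilon$-noninterference (Definition~\ref{dfn:ar-dc}) directly controls, and then push the pointwise bound through the averaging. The structure mirrors the proof of Theorem~\ref{thm:diff-causal-irr}; the only genuinely new feature is that the unintervened probability $\Pr[\inter{O}{=}o\given SE, B]$ must be expanded over the random value of $X$ as well as of $A$.

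First I would rewrite the unintervened side. Since $\inter{X}{=}X$ and $\inter{O}{=}\prog(\inter{X},\inter{A},R)$ in $SE$, expanding over the joint values of $X$ and $A$, then applying SE-independence and the freshness of $R$ (Definition~\ref{dfn:fresh}) to collapse $\prog(x',a,R)$ to a frequency term, gives
\[\Pr[\inter{O}{=}o\given SE, B] = \sum_{x',a} \Fr[\prog(x',a,R){=}o\given B]\,\Pr[X{=}x', A{=}a\given B].\]
Doing the same for the intervened side --- where $\inter{X}$ is fixed to $x$ and $A$ remains exogenous, so $\Pr[A{=}a\given\isub{X}{x}SE,B]=\Pr[A{=}a\given B]$ --- yields
\[\Pr[\inter{O}{=}o\given\isub{X}{x}SE, B] = \sum_{a} \Fr[\prog(x,a,R){=}o\given B]\,\Pr[A{=}a\given B].\]

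For the first inequality I would apply $\epsilon$-noninterference in the form $\Fr[\prog(x',a,R){=}o\given B]\leq e^{\epsilon}\Fr[\prog(x,a,R){=}o\given B]$ to each summand of the unintervened expression --- noting that the \emph{same} $a$ appears on both sides, which is exactly what noninterference requires --- and then marginalize $X$ away via $\sum_{x'}\Pr[X{=}x', A{=}a\given B]=\Pr[A{=}a\given B]$ to recover $e^{\epsilon}$ times the intervened expression. For the reverse inequality I would instead expand the marginal weight $\Pr[A{=}a\given B]$ in the intervened expression as $\sum_{x'}\Pr[X{=}x', A{=}a\given B]$, apply noninterference in the opposite direction $\Fr[\prog(x,a,R){=}o\given B]\leq e^{\epsilon}\Fr[\prog(x',a,R){=}o\given B]$ for each paired $(x',a)$, and recognize the result as $e^{\epsilon}\Pr[\inter{O}{=}o\given SE, B]$.

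The main obstacle to watch for is that the corollary assumes no independence between $X$ and $A$, so the unintervened distribution is \emph{not} simply a convex combination of the intervened distributions $\Pr[\inter{O}{=}o\given\isub{X}{x'}SE,B]$: confounding through $A$ breaks any such mixture identity, and hence one cannot invoke Theorem~\ref{thm:diff-causal-irr} as a black box. The fix is precisely to keep the computation at the level of the frequency terms $\Fr[\prog(x',a,R){=}o\given B]$, for which noninterference holds pointwise at each fixed $a$, letting the joint distribution $\Pr[X{=}x', A{=}a\given B]$ carry the confounding; marginalizing $X$ at the very end (or splitting the marginal of $A$ at the start) is what makes each bound go through with no independence hypothesis.
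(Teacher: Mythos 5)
Your proof is correct, and it handles the one real subtlety of this corollary---the possible correlation between $X$ and $A$---soundly; but it takes a more elementary route than the paper. The paper's proof expands only over the actual value of $X$, writing $\Pr[\inter{O}{=}o\given SE, B] = \sum_{x'} \Pr[\inter{O}{=}o\given X{=}x', SE, B]\,\Pr[X{=}x'\given SE, B]$, then uses the consistency observation that conditioning on $X{=}x'$ permits replacing $SE$ by $\isub{X}{x'}SE$, and finally applies Theorem~\ref{thm:diff-causal-irr} term by term to swap $\isub{X}{x'}SE$ for $\isub{X}{x}SE$ at a cost of $e^\epsilon$, keeping the conditioning on $X{=}x'$ in place throughout. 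In this light, your claim that Theorem~\ref{thm:diff-causal-irr} cannot be invoked as a black box is slightly too strong: what genuinely fails under confounding is the unconditional mixture identity $\Pr[\inter{O}{=}o\given SE,B] = \sum_{x'}\Pr[\inter{O}{=}o\given \isub{X}{x'}SE,B]\,\Pr[X{=}x'\given B]$, which both you and the paper correctly avoid, but the theorem can still be applied to the conditioned backgrounds $\both{X{=}x'}{B}$, because freshness of $R$ (Definition~\ref{dfn:fresh}) is preserved under conditioning on any event that does not reference $R$. Your alternative---expanding jointly over $(x',a)$ down to the raw frequency terms $\Fr[\prog(x',a,R){=}o\given B]$ and applying $\epsilon$-noninterference (Definition~\ref{dfn:ar-dc}) pointwise at fixed $a$---in effect inlines the proof of Theorem~\ref{thm:diff-causal-irr} with the conditioning on $X$ carried through. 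What your route buys is self-containedness and an explicit display of where the confounding is absorbed (in the joint weights $\Pr[X{=}x',A{=}a\given B]$, marginalized only at the end); what the paper's route buys is brevity, at the price of leaving the conditional applicability of Theorem~\ref{thm:diff-causal-irr} implicit and unjustified on the page.
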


\begin{proof}
  \begin{align*}
    &\Pr[\inter{O}{=}o \given SE, B] \\
  =& \sum_{x'} \Pr[\inter{O}{=}o \given X\equals x',  SE, B]\Pr[X\equals x'\given SE, B]\\
  =& \sum_{x'} \Pr[\inter{O}{=}o \given X\equals x',  \isub{X}{x'}SE, B]\Pr[X\equals x'\given \isub{X}{x'}SE, B]\tag{substituting for $X$}\\
  \leq& e^\epsilon \sum_x' \Pr[\inter{O}{=}o \given X\equals x', \isub{X}{x}SE,  B]\Pr[X\equals x'\given SE, B]\tag{Theorem~\ref{thm:diff-causal-irr}}\\
    =& e^\epsilon \Pr[\inter{O}{=}o \given\isub{X}{x}SE, B]\\
  \end{align*}
  The other direction follows similarly.
\end{proof}

\begin{dfn}
  \label{dfn:ar-assoc}
  A system $SE = \{\inter{O}{=}\prog(\inter{X},\inter{A},R), \inter{X}{=}X, \inter{A}{=}A\}$   has $\epsilon$-probabilistic associative independence on $\inter{O}$ with respect to $X$ for $B$ iff for each $x_1$, $x_2$ and $o$ such that  $\Pr[X{=}x_1\given SE, B] > 0$ and $\Pr[X{=}x_2 \given SE, B] > 0$,
  \[\Pr[\inter{O}{=}o\given X{=}x_1, SE, B] \leq e^\epsilon \Pr[\inter{O}{=}o\given X{=}x_2, SE, B]\]
\end{dfn}

$\epsilon$-probabilistic associative independence coincides with causal irrelevance when the sensitive input is independent of other inputs.

\begin{thm}
\label{thm:epsilon-causal-epi-irr}
  Consider a system $SE = \{\inter{O}{=}\prog(\inter{X},\inter{A},R), \inter{X}{=}X, \inter{A}{=}A\}$. If $X \bot A | B$, and $R$ is fresh, then, for $SE$, $X$ has $\epsilon$-probabilistic associative independence for $O$ with respect to $B$ iff $\inter{X}$ has $\epsilon$-probabilistic causal irrelevance for $O$ with respect to $B$.
\end{thm}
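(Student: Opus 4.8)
The plan is to reduce this biconditional to a pointwise equality between the two points of comparison, exactly as Theorem~\ref{thm:independence-intervention} reduced to Lemma~\ref{lemma:independence-intervention}, but now carrying the fresh randomness $R$ through the argument. Concretely, I would first establish the randomized analogue of Lemma~\ref{lemma:independence-intervention}: under the hypotheses $X \bot A \given B$ and $R$ fresh, for every $o$ and every $x$ with $\Pr[X\equals x \given SE, B] > 0$,
\[\Pr[\inter{O}\equals o \given X\equals x, SE, B] = \Pr[\inter{O}\equals o \given \isub{X}{x}SE, B].\]
Once this equality is in hand, the theorem is immediate: Definition~\ref{dfn:ar-assoc} and Definition~\ref{dfn:ar-causal} impose the very same inequality $\Pr[\cdots x_1 \cdots] \leq e^{\epsilon}\Pr[\cdots x_2 \cdots]$, differing only in whether the point of comparison is the associative quantity $\Pr[\inter{O}\equals o \given X\equals x_i, SE, B]$ or the causal quantity $\Pr[\inter{O}\equals o \given \isub{X}{x_i}SE, B]$. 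Since these agree term-by-term for each value $x_i$ and each $o$, the associative family of inequalities holds exactly when the causal family does, giving both directions at once.

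The heart of the argument is therefore the pointwise equality, and the key preliminary step is to upgrade the two independence hypotheses into a single joint independence of $X$ from the pair $(A, R)$ given $B$. From $R$ fresh (Definition~\ref{dfn:fresh}) I get $\Pr[R\equals r \given X\equals x, A\equals a, B] = \Pr[R\equals r \given B]$, since $X\equals x, A\equals a$ is a formula not referencing $R$; from $X \bot A \given B$ I get $\Pr[A\equals a \given X\equals x, B] = \Pr[A\equals a \given B]$. Factoring $\Pr[A\equals a, R\equals r \given X\equals x, B]$ as $\Pr[R\equals r \given A\equals a, X\equals x, B]\,\Pr[A\equals a \given X\equals x, B]$ and applying these two facts shows it equals $\Pr[A\equals a, R\equals r \given B]$, i.e.\ $X \bot (A, R) \given B$. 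I would then follow the same four-step chain as in the proof of Lemma~\ref{lemma:independence-intervention}: substitute $\inter{X}\equals x$ and $\inter{O}\equals\prog(x, A, R)$ (so the event becomes a statement about the background variables $A$ and $R$ only); use the joint independence, together with SE-independence (Assumption~\ref{assn:se-independence}) to attach $SE$ to the conditioning, to drop the conditioning on $X\equals x$; apply SE-independence once more to replace the plain $SE$ by the intervened $\isub{X}{x}SE$; and finally re-fold $\prog(x, A, R)$ back into $\inter{O}$ under the intervention.

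I expect the main subtlety to be positivity bookkeeping rather than any deep step. The associative point of comparison $\Pr[\inter{O}\equals o \given X\equals x, SE, B]$ is defined only when $\Pr[X\equals x \given SE, B] > 0$, whereas the causal point of comparison is defined for every $x$ because an intervention $\isub{X}{x}SE$ does not condition on a possibly-null event. As in Theorem~\ref{thm:independence-intervention}, I would therefore assume (or make explicit) that $\Pr[X\equals x \given B] > 0$ for all $x$, so that every associative conditional is well-defined and the pointwise equality can be invoked for all pairs $x_1, x_2$; without this assumption the equivalence still holds across all values of positive probability, but the causal condition could a priori constrain additional zero-probability values that the associative condition cannot see. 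The remaining care is simply to confirm that freshness is exactly what guarantees that conditioning on $X\equals x$ does not perturb the internal randomness $R$, which is what legitimizes the substitution step in the randomized setting.
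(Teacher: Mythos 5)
Your proposal is correct and matches the paper's approach: the paper proves this theorem with the single line ``This follows directly from Lemma~\ref{lemma:independence-intervention},'' i.e., exactly your reduction of the biconditional to a pointwise equality between the associative and causal points of comparison. You are in fact more careful than the paper, which cites the deterministic lemma (no $R$) without comment; your explicit upgrade of $X \bot A \given B$ plus freshness to $X \bot (A,R) \given B$ before re-running the lemma's substitution chain, and your observation that the theorem as stated omits the positivity hypothesis $\Pr[X{=}x \given B] > 0$ that Theorem~\ref{thm:independence-intervention} includes (needed so the associative condition constrains every value the causal condition does), fill genuine gaps in the paper's one-line proof.
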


\begin{proof}
  This follows directly from Lemma~\ref{lemma:independence-intervention}.
\end{proof}

Finally, we show that the two forms of associative independence considered in Table~\ref{tab:defs-summary} are not that different.
Approximate associative notions of dependence may relax independence in a number of different ways.
The form of approximate associative independence in Definition~\ref{dfn:ar-assoc} is employed by the Pufferfish privacy formalism~\cite{kifer12pods,kifer14database}.
Statistical nondisclosure is a different form that for Bayesian probabilities compares posterior beliefs to prior beliefs~\cite{dalenius77statistik}.
An approximate form dropping the requirement of Bayesian probabilities follows:
\begin{dfn}
  \label{dfn:ar-assoc-X}
  A system $SE = \{\inter{O}{=}\prog(\inter{X},\inter{A},R), \inter{X}{=}X, \inter{A}{=}A\}$ has $\epsilon$-probabilistic associative independence on $X$ with respect to $O$ for $B$ iff for each $x_1$, $x_2$, and $o$ such that $\Pr[O{=}o\given SE, B] > 0$,
\begin{align}
 \Pr[X{=}x \given O{=}o, SE, B] &\leq e^{\epsilon} \Pr[X{=}x \given SE, B] &&\text{ and }\\
  \Pr[X{=}x \given SE, B] &\leq e^{\epsilon} \Pr[X{=}x \given O{=}o, SE, B]
\end{align}
\end{dfn}

We show here that the two formulations are very closely related. Since the structural equations $SE$, and background $B$ don't change throughout this section, we elide them from the statements in the rest of this section.

\begin{thm}\label{thm:assoc-to-assoc-X}
If for all $o$, $x_1$, and $x_2$, such that $\Pr[X{=}x_1\given SE, B] > 0$ and $\Pr[X{=}x_2\given SE, B] > 0$,
\[ \Pr[O{=}o \given X{=}x_1, SE, B] \leq e^{\epsilon} \Pr[O{=}o \given X{=}x_2, SE, B]\]
then, for all $o$ and $x$ such that $\Pr[O{=}o\given SE, B] > 0$,
\[\Pr[X{=}x \given O{=}o, SE, B] \leq e^{\epsilon} \Pr[X{=}x \given SE, B] \]
and
\[ \Pr[X{=}x \given SE, B] \leq e^{\epsilon} \Pr[X{=}x \given O{=}o, SE, B]\]
\end{thm}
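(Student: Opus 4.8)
The plan is to use Bayes' rule to pass from the \emph{forward} conditionals $\Pr[O{=}o\given X{=}x', SE, B]$ appearing in the hypothesis to the \emph{backward} quantities $\Pr[X{=}x\given O{=}o, SE, B]$ appearing in the conclusion. The single identity driving everything is that, for any $x$ with $\Pr[X{=}x\given SE, B] > 0$,
\[\frac{\Pr[X{=}x\given O{=}o, SE, B]}{\Pr[X{=}x\given SE, B]} = \frac{\Pr[O{=}o\given X{=}x, SE, B]}{\Pr[O{=}o\given SE, B]},\]
which is immediate from writing each side as $\Pr[X{=}x, O{=}o\given SE, B]$ divided by the appropriate marginal. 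Both target inequalities are therefore equivalent to the single sandwich $e^{-\epsilon} \leq \Pr[O{=}o\given X{=}x, SE, B]/\Pr[O{=}o\given SE, B] \leq e^{\epsilon}$.

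To establish that sandwich, I would expand the denominator by the law of total probability over the values of $X$, summing only over $x'$ with $\Pr[X{=}x'\given SE, B] > 0$:
\[\Pr[O{=}o\given SE, B] = \sum_{x'} \Pr[O{=}o\given X{=}x', SE, B]\,\Pr[X{=}x'\given SE, B].\]
Applying the hypothesis with $(x_1,x_2) = (x,x')$ gives $\Pr[O{=}o\given X{=}x, SE, B] \leq e^{\epsilon}\Pr[O{=}o\given X{=}x', SE, B]$ for every such $x'$, and with $(x_1,x_2) = (x',x)$ gives the reverse bound. Substituting these termwise into the sum and using $\sum_{x'}\Pr[X{=}x'\given SE, B] = 1$ yields
\[e^{-\epsilon}\Pr[O{=}o\given X{=}x, SE, B] \leq \Pr[O{=}o\given SE, B] \leq e^{\epsilon}\Pr[O{=}o\given X{=}x, SE, B],\]
which is exactly the desired sandwich. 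Dividing through and invoking the Bayes identity above delivers both conclusions at once.

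The remaining work is only bookkeeping, and that is where the one subtlety lies rather than in any real obstacle. When $\Pr[X{=}x\given SE, B] = 0$ the Bayes identity is vacuous, but then $\Pr[X{=}x, O{=}o\given SE, B] = 0$ as well, so $\Pr[X{=}x\given O{=}o, SE, B] = 0 = \Pr[X{=}x\given SE, B]$ and both inequalities hold trivially; I would dispatch this case first. I also need the total-probability expansion to be legitimate, which is why the sum is restricted to positive-probability $x'$ (the excluded terms contribute nothing, since $\Pr[O{=}o, X{=}x'\given SE, B] \leq \Pr[X{=}x'\given SE, B] = 0$), and I am implicitly relying only on $SE$ and $B$ being held fixed throughout, exactly as the theorem elides them.
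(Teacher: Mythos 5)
Your proposal is correct and follows essentially the same route as the paper's proof: both expand $\Pr[O{=}o\given SE,B]$ by total probability over the support of $X$, apply the hypothesis termwise, normalize using $\sum_{x'}\Pr[X{=}x'\given SE,B]=1$, convert to the posterior via Bayes' rule, and dispatch the $\Pr[X{=}x\given SE,B]=0$ case separately. Your packaging of both conclusions into a single two-sided sandwich is a tidier presentation than the paper's ``similarly, we can show'' for the second inequality, but the underlying argument is the same.
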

\begin{proof}
  Assume for all $o$, $x_1$, and $x_2$ such that $\Pr[X{=}x_1\given SE, B] > 0$ and $\Pr[X{=}x_2\given SE, B] > 0$,
  \begin{align*}
\Pr[O{=}o \given X{=}x_1, SE, B] &\leq e^{\epsilon} \Pr[O{=}o \given X{=}x_2, SE, B]
\end{align*}
This implies that, for all $o$, $x_1$, and $x_2$ such that $\Pr[X{=}x_1\given SE, B] > 0$ and $\Pr[X{=}x_2\given SE, B] > 0$,
\begin{multline}
\Pr[O{=}o \given X{=}x_1, SE, B]\Pr[X{=}x_2 \given SE, B]
\\\leq e^{\epsilon} \Pr[O{=}o \given X{=}x_2, SE, B]\Pr[X{=}x_2 \given SE, B] \label{eqn:pufferfish-to-stat-nondis.withx2}
\end{multline}
Since \eqref{eqn:pufferfish-to-stat-nondis.withx2} holds for all $x_2$ such that $\Pr[X{=}x_2 \given SE, B] > 0$,
the following must also hold, for all $o$ and $x_1$ such that $\Pr[X{=}x_1 \given SE, B] > 0$:
{\footnotesize
\begin{align*}
\sum_{x_2 \in \mc{X}'}\Pr[O{=}o \given X{=}x_1, SE, B]\Pr[X{=}x_2 \given SE, B] &\leq \sum_{x_2 \in \mc{X}'} e^{\epsilon} \Pr[O{=}o \given X{=}x_2, SE, B]\Pr[X{=}x_2 \given SE, B]\\
\Pr[O{=}o \given X{=}x_1, SE, B]\sum_{x_2 \in \mc{X}'}\Pr[X{=}x_2 \given SE, B] &\leq e^{\epsilon} \sum_{x_2 \in \mc{X}'}\Pr[O{=}o \given X{=}x_2, SE, B]\Pr[X{=}x_2 \given SE, B]\\
\Pr[O{=}o \given X{=}x_1, SE, B] * 1 &\leq e^{\epsilon} \sum_{x_2 \in \mc{X}}\Pr[O{=}o \land X{=}x_2 \given SE, B]\\
\Pr[O{=}o \given X{=}x_1, SE, B] &\leq e^{\epsilon} \Pr[O{=}o \given SE, B]\\
\Pr[O{=}o \given X{=}x_1, SE, B]\Pr[X{=}x_1 \given SE, B] &\leq e^{\epsilon} \Pr[O{=}o \given SE, B]\Pr[X{=}x_1 \given SE, B]
\end{align*}
}
where $\mc{X}_2$ is the range of $X$ and $\mc{X}'$ is the support of $X$ given $SE$ and $B$:
$\set{x \in \mc{X}}{\Pr[X{=}x \given SE, B] > 0}$.
From this it follows that
for all $o$ and $x_1$ such that
$\Pr[O{=}o \given SE, B] > 0$ and
$\Pr[X{=}x_1 \given SE, B] > 0$:
\begin{align*}
\frac{\Pr[O{=}o \given X{=}x_1, SE, B]\Pr[X{=}x_1 \given SE, B]}{\Pr[O{=}o \given SE, B]} &\leq e^{\epsilon} \Pr[X{=}x_1 \given SE, B]\\
\Pr[X=x_1\given O{=}o, SE, B] &\leq e^{\epsilon} \Pr[X{=}x_1 \given SE, B]
\end{align*}
In the case where $\Pr[X{=}x_1 \given SE, B] = 0$, the final inequality also holds since both sides are $0$.

Similarly, we can show that
\[\Pr[X=x_1 \given SE, B] \leq e^{\epsilon} \Pr[X{=}x_1\given O{=}o, SE, B]\]
\end{proof}

\begin{thm}\label{thm:assoc-X-to-assoc}
If for all $o$ and $x$ such that $\Pr[O{=}o\given SE, B] > 0$,
\[\Pr[X{=}x \given O{=}o, SE, B] \leq e^{\epsilon} \Pr[X{=}x \given SE, B]\]
and
\[\Pr[X{=}x \given SE, B] \leq e^{\epsilon} \Pr[X{=}x \given O{=}o, SE, B]\]
then, for all $o$, $x_1$, and $x_2$, such that $\Pr[X{=}x_1\given SE, B] > 0$ and $\Pr[X{=}x_2\given SE, B] > 0$,
\[\Pr[O{=}o \given X{=}x_1, SE, B] \leq e^{2\epsilon} \Pr[O{=}o \given X{=}x_2, SE, B]\]
\end{thm}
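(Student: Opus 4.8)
The plan is to pass through the marginal $\Pr[O{=}o \given SE, B]$ as an intermediary, exactly as in the proof of Theorem~\ref{thm:assoc-to-assoc-X}, but running Bayes' theorem in the opposite direction. The two hypotheses sandwich the posterior $\Pr[X{=}x \given O{=}o, SE, B]$ between $e^{-\epsilon}$ and $e^{\epsilon}$ times the prior $\Pr[X{=}x \given SE, B]$; I would convert each of these into a bound relating the conditional $\Pr[O{=}o \given X{=}x, SE, B]$ to the marginal $\Pr[O{=}o \given SE, B]$, and then chain the two bounds, spending one factor of $e^{\epsilon}$ on each of $x_1$ and $x_2$.

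First I would dispose of the degenerate case $\Pr[O{=}o \given SE, B] = 0$. Since $\Pr[O{=}o \given SE, B] = \sum_{x}\Pr[O{=}o \given X{=}x, SE, B]\,\Pr[X{=}x \given SE, B]$ is a sum of nonnegative terms and $\Pr[X{=}x_1 \given SE, B] > 0$, this forces $\Pr[O{=}o \given X{=}x_1, SE, B] = 0$, so the conclusion reads $0 \leq e^{2\epsilon}\cdot 0$ and holds trivially. Hence I may assume $\Pr[O{=}o \given SE, B] > 0$, which is exactly the condition under which both hypotheses are available at this $o$.

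For the main case, Bayes' theorem (legitimate because $\Pr[X{=}x_1 \given SE, B] > 0$) gives
\[\Pr[O{=}o \given X{=}x_1, SE, B] = \frac{\Pr[X{=}x_1 \given O{=}o, SE, B]\,\Pr[O{=}o \given SE, B]}{\Pr[X{=}x_1 \given SE, B]}.\]
Substituting the first hypothesis $\Pr[X{=}x_1 \given O{=}o, SE, B] \leq e^{\epsilon}\Pr[X{=}x_1 \given SE, B]$ and cancelling yields the upper bound $\Pr[O{=}o \given X{=}x_1, SE, B] \leq e^{\epsilon}\Pr[O{=}o \given SE, B]$. Symmetrically, applying Bayes at $x_2$ together with the second hypothesis rewritten as $\Pr[X{=}x_2 \given O{=}o, SE, B] \geq e^{-\epsilon}\Pr[X{=}x_2 \given SE, B]$ yields the lower bound $\Pr[O{=}o \given X{=}x_2, SE, B] \geq e^{-\epsilon}\Pr[O{=}o \given SE, B]$. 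Combining, $\Pr[O{=}o \given X{=}x_1, SE, B] \leq e^{\epsilon}\Pr[O{=}o \given SE, B] = e^{2\epsilon} e^{-\epsilon}\Pr[O{=}o \given SE, B] \leq e^{2\epsilon}\Pr[O{=}o \given X{=}x_2, SE, B]$, the desired claim.

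I expect no deep obstacle: the argument is a direct double application of Bayes' theorem, and the doubling to $2\epsilon$ is forced by using one directional bound per individual. The only care required is bookkeeping — confirming the conditioning events have positive probability so the divisions are valid. The reason to route through the marginal $\Pr[O{=}o \given SE, B]$ rather than form the ratio $\Pr[O{=}o \given X{=}x_1, SE, B]/\Pr[O{=}o \given X{=}x_2, SE, B]$ directly is precisely to sidestep the need to lower-bound $\Pr[O{=}o \given X{=}x_2, SE, B]$ away from zero, which the up-front handling of the $\Pr[O{=}o \given SE, B] = 0$ case, combined with $\Pr[X{=}x_2 \given SE, B] > 0$, then renders automatic.
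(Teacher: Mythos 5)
Your proposal is correct and follows essentially the same route as the paper's proof: both apply Bayes' rule to each hypothesis to obtain $\Pr[O{=}o \given X{=}x_1, SE, B] \leq e^{\epsilon}\Pr[O{=}o \given SE, B]$ and $\Pr[O{=}o \given SE, B] \leq e^{\epsilon}\Pr[O{=}o \given X{=}x_2, SE, B]$, then chain them through the marginal to get the factor $e^{2\epsilon}$. Your up-front treatment of the degenerate case $\Pr[O{=}o \given SE, B] = 0$ matches the paper's inline remark (and is stated more cleanly, since the paper's version contains a sign typo in that side condition).
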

\begin{proof}
  Assume for all $o$ and $x$ such that $\Pr[O{=}o\given SE, B] > 0$,
\begin{align*}
\Pr[X{=}x \given O{=}o, SE, B] &\leq e^{\epsilon} \Pr[X{=}x \given SE, B]
\end{align*}
From this and Bayes's Rule, it follows that
for all $o$ and $x$ such that
$\Pr[O{=}o\given SE, B] > 0$ and
$\Pr[X{=}x\given SE, B] > 0$,
\begin{align*}
\frac{\Pr[O{=}o \given X{=}x, SE, B]\Pr[X{=}x \given SE, B]}{\Pr[O{=}o \given SE, B]} &\leq e^{\epsilon} \Pr[X{=}x \given SE, B]\\
\Pr[O{=}o \given X{=}x, SE, B] &\leq e^{\epsilon} \Pr[O{=}o \given SE, B]\\
\end{align*}
The last line also holds in the case where $\Pr[O{=}o\given SE, B] > 0$ since both sides will be zero.

Similarly, we can show that for all $o$ and $x$ such that $\Pr[X{=}x \given SE, B] > 0$,
\[ \Pr[O=o \given SE, B] \leq e^{\epsilon} \Pr[O{=}o\given X{=}x, SE, B] \]

Putting the two together, we get
for all $o$, $x_1$, and $x_2$ such that
$\Pr[X_1{=}x_1\given SE, B] > 0$ and
$\Pr[X_2{=}x_2\given SE, B] > 0$,
\[ \Pr[O{=}o \given X{=}x_1, SE, B] \leq e^{\epsilon} \Pr[O{=}o \given SE, B] \leq e^{2\epsilon} \Pr[O{=}o \given X{=}x_2, SE, B] \]
\end{proof}

\section{Privacy}
\label{sec:privacy}

We now instantiate the definitions above to obtain various definitions of
privacy reported in prior work.
We will organize this section around the two types of privacy definitions found in Table~\ref{tab:properties-summary3}.
For each type, we will first discuss its general nature and then relate it to the most well known formal notion of privacy, differential privacy.

\subsection{Direct Use Privacy: Frequentist Causal Irrelevance}
\label{sec:direct-use-privacy}

The first kind of privacy requirements we examine is \emph{direct use
privacy}, which requires that some data be not used to produce some output.
One way of
formalizing non-use is Pearl's notion of causal irrelevance, which holds if
the outcomes over a population represented by $B$ are identical when the
sensitive input is intervened on.
Since in this section we will be using more than one type of background
distribution $B$, we will denote ones modeling populations as $B_P$.
This view is the basis for Tschantz et al.\@'s prior work on
inferring information use by detecting causation~\cite{tschantz15csf}.
Their work started with \emph{noninterference}, the classic security
property formalizing information flow~\cite{gm82security}, which is
mathematically equivalent to direct use privacy despite the
difference in application.
It showed that noninterference is equivalent to Pearl's notion of
causation.
That is,
absolute direct use privacy holds iff for all $o$, $x_1$, $x_2$, and $B_P$,
\begin{align}
 \Fr[\inter{O}\equals o \given \isub{X}{x_1}SE, B_P]
 &= \Fr[\inter{O}\equals o \given \isub{X}{x_2}SE, B_P]
   \label{eqn:direct-use-privacy}
 \end{align}
where
$SE = \{\inter{O}{=}\prog(\inter{X},\inter{A}), \inter{X}{=}X, \inter{A}{=}A\}$
as usual.
Differential privacy can be understood as a relaxation of this property, which
we'll cover in detail below.

\paragraph{Differential Privacy}

Differential privacy is a relaxed form of direct use privacy.
It can be expressed as either $\epsilon$-noninterference or $\epsilon$-causal irrelevance for database rows.
For the setting of database privacy, where a function operates on a database comprising a set of
rows, differential privacy requires that the distribution over outcomes is not significantly affected
by the value of a single row in the database.

\begin{dfn}
  \label{dfn:dp}
  A system $\prog(d, R)$, operating over a data set $d$ has \emph{$\epsilon$-differential privacy} iff for all rows $i$, databases $d$, and row values $x$, $x'$, and outputs $o$
  \[\Fr[\prog(d_{-i}x, R) = o] \leq e^\epsilon \Fr[\prog(d_{-i}x', R)]\]
\end{dfn}

In the definition above, for a database $d$, $d_{-i}$ refers to the remainder of the database with
the row indexed by $i$ left out, and $d_{-i}x$ refers to the database $d$ with the value $x$ in row $i$.
We allow a special value $\bot$ for $x$ or $x'$ where $d_{-i}\bot$ denotes the database without replacing the $i$th entry.
(This is the so-call \emph{bounded} formulation of differential privacy.)
In the notation we used before, $d_{-i}$ can be thought of as the value of $\inter{A}$ and $x$ as the value of $\inter{X}$ with the split of the data points into $\inter{A}$ and $\inter{X}$ varying with the value of $i$.
$R$, as before, represents the randomness used by $s$.
That is, differential privacy is $\epsilon$-noninterference with respect to each
row in the database.
It means that intervening on any row does not affect the distribution
over outcomes significantly.
\begin{prp}
  A function $\prog(D, R)$, where $D = \langle D_1, \cdots, D_k\rangle$, has $\epsilon$-differential privacy iff for each $i$, $\prog(D, R)$ has $\epsilon$-noninterference with respect to $D_i$.
\end{prp}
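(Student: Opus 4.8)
The plan is to recognize that this proposition is essentially a re-indexing of two definitions, so the proof proceeds by unfolding both sides under a single notational identification rather than by any analytic work. First I would fix a row index $i$ and identify the $i$th coordinate $D_i$ with the sensitive input $X$ of Definition~\ref{dfn:ar-dc}, and the remaining coordinates $D_{-i}$ with the auxiliary input $A$. Under this identification, writing a database as $d_{-i}x$ amounts exactly to supplying the value $a = d_{-i}$ to the $A$-slot and the value $x$ to the $X$-slot, so that $\prog(d_{-i}x, R) = \prog(x, a, R)$ in the notation of $\epsilon$-noninterference.

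Next I would compare the two universally quantified inequalities directly. The $\epsilon$-noninterference condition for $D_i$ requires, for all row values $x_1, x_2$, all contexts $a$, and all outputs $o$, that $\Fr[\prog(x_1, a, R)\equals o] \le e^\epsilon \Fr[\prog(x_2, a, R)\equals o]$. Differential privacy, from Definition~\ref{dfn:dp} restricted to changes in row $i$, requires for all databases $d$, row values $x, x'$, and outputs $o$ that $\Fr[\prog(d_{-i}x, R)\equals o] \le e^\epsilon \Fr[\prog(d_{-i}x', R)\equals o]$. Under the identification above these are literally the same family of inequalities: quantifying over all databases $d$ with row $i$ held free is the same as quantifying over all contexts $a = d_{-i}$, and the two row values $x, x'$ play the roles of $x_1, x_2$. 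For the forward direction I would note that differential privacy quantifies over all rows $i$ simultaneously, so it yields $\epsilon$-noninterference for each individual $D_i$; for the reverse direction, assuming $\epsilon$-noninterference holds for every $i$ recovers the full differential-privacy inequality for every choice of changed row.

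The remaining points need care but are bookkeeping rather than a genuine obstacle. I would observe explicitly that the distribution $B$ over $R$ in Definition~\ref{dfn:ar-dc} plays only the role of fixing the frequency distribution of the internal randomness, which matches the implicit distribution over $R$ used in Definition~\ref{dfn:dp}, so suppressing $B$ loses nothing. I would also remark that the special missing-row value $\bot$ is simply treated as one more admissible value in the $X$-slot, so the bounded model is accommodated without altering the argument. Since every step is a definitional rewriting, there is no hard analytic content; the entire substance lies in making the correspondence between \emph{changing row $i$} and \emph{intervening on the sensitive input} precise.
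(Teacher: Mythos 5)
Your proposal is correct and takes essentially the same route as the paper: the paper asserts that ``this equivalence follows directly from the definitions,'' having already set up, in the text preceding the proposition, the exact identification you make ($d_{-i}$ playing the role of $\inter{A}$, the changed row playing the role of $\inter{X}$, with the split varying with $i$, and $\bot$ admitted as a row value). Your write-up simply makes explicit the definitional unfolding the paper leaves implicit, including the bookkeeping about $B$ and $\bot$, so there is no substantive difference.
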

This equivalence follows directly from the definitions.
Moreover, we can express differential privacy as causal irrelevance:
\begin{prp}
  Consider system defined by the structural equations
  $SE = \{
  \inter{O}{=}\prog(\inter{D},R),
  \inter{A}{=}A,
  \inter{D}_1{=}D_1, \ldots, \inter{D}_k{=}D_k\}$.
  The function $\prog$ has $\epsilon$-differential privacy iff for all $i$ and all $x_1$ and $x_2$, and for all $B_P$,
\[\Fr[\prog(\inter{D}, R) \given \preisub{\inter{D}_i}{x_1}SE, B_P] = e^\epsilon \Fr[\prog(\inter{D}, R) \given \preisub{\inter{D}_i}{x_2}SE, B_P] \]
where we use $\inter{D}$ as short for $\langle\inter{D}_1,\ldots,\inter{D}_k\rangle$.
\end{prp}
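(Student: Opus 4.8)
The plan is to obtain this proposition by chaining the preceding proposition---which identifies $\epsilon$-differential privacy with $\epsilon$-noninterference with respect to each row $D_i$---together with Theorem~\ref{thm:diff-causal-irr} and a specialization argument supplying its converse. The one conceptual step is to view the multi-row system, for each fixed index $i$, in the single-sensitive-input form required by Theorem~\ref{thm:diff-causal-irr}: let $\inter{D}_i$ play the role of the sensitive input and let the remaining rows $\inter{D}_{-i}$ (together with the unused $\inter{A}$) play the aggregated role of the non-sensitive input in Theorem~\ref{thm:diff-causal-irr}. As observed just before the preceding proposition, this is exactly the decomposition of a database $d$ into its $i$th entry and its remainder $d_{-i}$.

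For the forward direction, I would assume $\prog$ has $\epsilon$-differential privacy and apply the preceding proposition to get $\epsilon$-noninterference with respect to each $D_i$ in the sense of Definition~\ref{dfn:ar-dc}. Fixing $i$ under the regrouping above, I would then invoke Theorem~\ref{thm:diff-causal-irr}, using that the internal randomness $R$ is fresh (Definition~\ref{dfn:fresh}) for any population distribution $B_P$, since $R$ models the algorithm's own coin flips and is independent of the population. The theorem then yields that $\inter{D}_i$ has $\epsilon$-probabilistic causal irrelevance for $\inter{O}$ with respect to every such $B_P$, which is precisely the displayed relation (Definition~\ref{dfn:ar-causal}).

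For the converse, I would assume the causal-irrelevance relation holds for every $i$ and every $B_P$ and recover noninterference by specializing $B_P$. For each fixed value $d_{-i}$ of the remaining rows, take $B_P$ to be the point mass concentrating those background variables on $d_{-i}$, with $R$ still distributed by $B_R$ so that $R$ remains fresh. Under this $B_P$ the intervention $\preisub{\inter{D}_i}{x_1}SE$ fixes every data input, so the causal-irrelevance relation collapses to $\Fr[\prog(d_{-i}x_1, R)\equals o] \leq e^{\epsilon}\, \Fr[\prog(d_{-i}x_2, R)\equals o]$. As $d_{-i}$ and $o$ were arbitrary, this is exactly $\epsilon$-noninterference with respect to $D_i$, and the preceding proposition then returns $\epsilon$-differential privacy.

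I expect the main obstacle to be bookkeeping rather than depth: I must keep the regrouping of the $k-1$ non-sensitive rows into a single aggregate input consistent across both directions, and I must track the freshness of $R$ carefully---it is supplied as a hypothesis to Theorem~\ref{thm:diff-causal-irr} in the forward direction and must be preserved by the point-mass $B_P$ in the converse. A minor but worth-noting point is that the displayed relation should read $\leq e^{\epsilon}$ and compare $\Fr[\inter{O}\equals o \given \cdots]$, matching the one-sided form of Definition~\ref{dfn:ar-causal}; the two-sided guarantee of differential privacy is then recovered by exchanging $x_1$ and $x_2$.
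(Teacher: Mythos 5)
Your proposal is correct and follows essentially the same route as the paper, which likewise obtains the result by combining the row-wise noninterference characterization of differential privacy with Theorem~\ref{thm:diff-causal-irr}; the paper, however, states only that the equivalence is a corollary of that theorem and defers all details to Tschantz~et~al.~\cite{tschantz17arxiv}. Your additions --- the point-mass choice of $B_P$ to recover noninterference in the converse direction (which Theorem~\ref{thm:diff-causal-irr} alone, being one-directional, does not supply), and the observation that the displayed relation should read $\leq e^{\epsilon}$ and compare $\Fr[\inter{O}\equals o \given \cdots]$ --- are exactly the details the paper glosses over.
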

This equivalence is the corollary of Theorem~\ref{thm:diff-causal-irr}.
Tschantz~et~al.\@ provide details~\cite{tschantz17arxiv}.

\subsection{Associative Inferential Privacy: Bayesian Associative Independence}

Associative inferential privacy tries to capture the change in the beliefs of an adversary
after observing the outcome of a program. This notion is well captured
by measuring the difference in the Bayesian distributions representing the
adversary's beliefs with and without conditioning on the outcome.
That is, if $\varphi$ is the proposition of interest that the
adversary is attempting to learn about, we compare
$\Cr[\varphi \given O\equals o, B_K]$ to $\Cr[\varphi \given B_K]$
where $B_K$ is a background distribution representing the adversary's
background knowledge about background variables.
Given that the definition we consider in this section does not use
causal interventions, $B_K$ can more generally refer to any background
knowledge.
In the absolute case, we require that
\begin{align}
\Cr[\varphi \given O\equals o, B_K] &= \Cr[\varphi \given B_K]
\end{align}
This concept goes back to at least Dalenius in 1977 as \emph{statistical nondisclosure}~\cite{dalenius77statistik}.
The definition looks more familiar if we consider the special case where $\varphi$ is $X = x$:
\begin{align}
\Cr[X{=}x \given O\equals o, B_K] &= \Cr[X{=}x \given B_K] \label{eqn:stat-nondisc-x}
\end{align}

This requirement is closely related to a similar special case Pufferfish privacy~\cite{kifer12pods,kifer14database}:
\begin{align}
\Cr[\inter{O}\equals o \given \inter{X}\equals x_1, SE, B_K]
&= \Cr[\inter{O}\equals o \given \inter{X}\equals x_2, SE, B_K]
\label{eqn:pufferfish}
\end{align}
Section~\ref{sec:props:approx} shows approximate forms of these requirements that, similar to differential privacy allows them to differ by a factors of $e^\epsilon$.
The section also shows that they are nearly equivalent, with
the relaxation of \eqref{eqn:pufferfish} implying the relaxation of \eqref{eqn:stat-nondisc-x}
and
the relaxation of \eqref{eqn:stat-nondisc-x} implying the relaxation
of \eqref{eqn:pufferfish} with the privacy budget diluted from
$\epsilon$ to $2\epsilon$.

Other approaches to measuring an associative difference
include measuring a notion of accuracy~\cite{clarkson05csf} or a difference
in beliefs over the runs of a system~\cite{halpern08tissec}.

Such associative inferential requirements are hard to meet.
Typically, we would not know the background knowledge of the adversary.
In the worst case, we might need to consider any background knowledge as possible.
Dwork and Naor have shown that it is impossible, in
general, to both satisfy this requirement for all background knowledge
sets $B_K$ and release an output $O$ providing utility~\cite{dwork08jpc}.
In Section~\ref{sec:imposs}, we review and re-present Dwork and Naor's
proof.

\paragraph{Differential Privacy implies Associative Inferential Privacy for Independent Data Points}

One way to get around the aforementioned impossibility result is to
attain the property for a restricted set of background knowledge.
If we require that the data points be independent under the adversary's
background knowledge, then differential privacy implies associative inferential privacy.
We can now use the results proved in Section~\ref{sec:props:approx} for differential privacy.

The implication states that for background knowledge such that each row
is independent, conditioning on the value of a row does not change the beliefs
about the outcomes. The requirement for independence to obtain such an
epistemic guarantee has been pointed out in prior work~\cite{kifer12pods,kifer14database}, and
follows directly from Theorem~\ref{thm:diff-causal-irr} and
Theorem~\ref{thm:epsilon-causal-epi-irr}.

\begin{prp}
\label{prp:dp-implies-assoc-inf-priv}
  Consider a system $SE = \{\inter{O}{=}\prog(\inter{D},R), \inter{D}{=}D,
  \inter{A}{=}A\}$ and $\inter{D} = \langle\inter{D}_1,
  \cdots\inter{D}_k\rangle$, and $D = \langle D_1, \cdots, D_k\rangle$. If
  $D_i\bot D_{-i} | B_K$ for all $i$, and $R$ is fresh, then for all $i$, $d_i$, $d_i'$, and $o$,
  \[\Cr[\inter{O}\equals o\given \inter{D}_i=d_i, SE, B_K] \leq e^{\epsilon} \Cr[\inter{O}\equals o\given \inter{D}_i=d_i', SE, B_K]\]
  if $\prog$ is $\epsilon$-differentially private.
\end{prp}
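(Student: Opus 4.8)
The plan is to derive the claim purely by chaining the two equivalences already established in Section~\ref{sec:props:approx}, since $\epsilon$-differential privacy is nothing more than row-wise $\epsilon$-noninterference. Fix a row index $i$. By the preceding proposition characterizing $\epsilon$-differential privacy as $\epsilon$-noninterference with respect to each row, $\prog$ has $\epsilon$-noninterference (Definition~\ref{dfn:ar-dc}) once we regard $\inter{D}_i$ as the sensitive input and the remaining rows $\inter{D}_{-i}$ as the other input. Formally, I would set $\inter{X}\defeq\inter{D}_i$ and $\inter{A}\defeq\inter{D}_{-i}$, and reassemble $\prog(\inter{D},R)$ as a function of $\inter{D}_i$, $\inter{D}_{-i}$, and $R$, so that the system matches the canonical form $\{\inter{O}\equals\prog(\inter{X},\inter{A},R),\inter{X}\equals X,\inter{A}\equals A\}$ used by the theorems.

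With this identification, I would invoke Theorem~\ref{thm:diff-causal-irr}: since $R$ is fresh for $B_K$, the $\epsilon$-noninterference of $\prog$ with respect to $\inter{D}_i$ yields $\epsilon$-probabilistic causal irrelevance of $D_i$ for $O$ with respect to $B_K$ (Definition~\ref{dfn:ar-causal}). Next I would apply Theorem~\ref{thm:epsilon-causal-epi-irr}, whose hypotheses $X\bot A\given B$ and ``$R$ fresh'' are met here by the assumed independence $D_i\bot D_{-i}\given B_K$ and the freshness of $R$. That theorem supplies the equivalence between $\epsilon$-probabilistic causal irrelevance and $\epsilon$-probabilistic associative independence, so the causal irrelevance obtained in the previous step upgrades to $\epsilon$-probabilistic associative independence of $D_i$ for $O$ (Definition~\ref{dfn:ar-assoc}). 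Finally, I would instantiate the generic probability $\Pr$ in Definition~\ref{dfn:ar-assoc} as the credence $\Cr$ and take $x_1\equals d_i$, $x_2\equals d_i'$, yielding
\[\Cr[\inter{O}\equals o\given D_i\equals d_i, SE, B_K]\leq e^{\epsilon}\Cr[\inter{O}\equals o\given D_i\equals d_i', SE, B_K],\]
which is the desired inequality once we note the argument holds verbatim for every $i$.

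The only genuine bookkeeping obstacle is the identification between conditioning on the background variable $D_i$ (as the Section~\ref{sec:props:approx} definitions are phrased) and conditioning on the endogenous input $\inter{D}_i\equals d_i$ (as the proposition is phrased); because $\inter{D}_i\equals D_i$ is a structural equation of $SE$, these two conditioning events coincide given $SE$, so the two forms are interchangeable. Beyond that, one must apply the decomposition of $\inter{D}$ into $(\inter{D}_i,\inter{D}_{-i})$ together with the matching independence hypothesis $D_i\bot D_{-i}\given B_K$ separately for each row $i$, and check that the positivity side-conditions in Definitions~\ref{dfn:ar-assoc} and~\ref{dfn:ar-causal} are inherited from those already handled inside Theorems~\ref{thm:diff-causal-irr} and~\ref{thm:epsilon-causal-epi-irr}. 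No new estimate is required; all the content lies in lining up the hypotheses of the two theorems.
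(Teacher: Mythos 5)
Your proposal is correct and follows essentially the same route as the paper: the paper offers no separate proof, stating only that the proposition ``follows directly from Theorem~\ref{thm:diff-causal-irr} and Theorem~\ref{thm:epsilon-causal-epi-irr},'' which is precisely the chain you spell out (row-wise $\epsilon$-noninterference from Definition~\ref{dfn:dp}, then causal irrelevance via Theorem~\ref{thm:diff-causal-irr} using freshness of $R$, then associative independence via Theorem~\ref{thm:epsilon-causal-epi-irr} using $D_i \bot D_{-i} \given B_K$). Your added bookkeeping---identifying $\inter{X}$ with $\inter{D}_i$ and $\inter{A}$ with $\inter{D}_{-i}$, equating conditioning on $D_i$ with conditioning on $\inter{D}_i$ under $SE$, and flagging the positivity side-conditions that the proposition's statement silently omits---is exactly the detail the paper leaves implicit.
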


Another reasonable restriction on background knowledge is to tie the background knowledge to an objective population frequency distribution: the distribution that captures the associations known about a population or set of populations.
This approach has its own challenges that we discuss below.

\subsection{Indirect Use Privacy: Frequentist Associative Independence}
\label{sec:indirect-use-privacy}

Above, we considered a frequentist causal property and a Bayesian
associative property.
Here, we consider the possibility of a frequentist associative notion
of privacy.
Let \emph{indirect use privacy} be the requirement that the
output is not statistically associated with the sensitive attribute $X$.
For a population $B_P$, absolute indirect use privacy holds iff for all $o$, $x_1$, and $x_2$,
\[ \Fr[\inter{O}\equals o \given \inter{X}\equals x_1, SE, B_P]
 = \Fr[\inter{O}\equals o \given \inter{X}\equals x_2, SE, B_P] \]
where
$SE = \{\inter{O}{=}\prog(\inter{X},\inter{A}), \inter{X}{=}X, \inter{A}{=}A\}$
as usual.
This property can be understood as a more objective version of associative inferential privacy.
That notion of privacy is subjective in the sense that it depends upon the state of knowledge of an adversary.
This definition replaces that background knowledge $B_K$ with the actual population $B_P$ as it switches from credences to frequencies, making it more objective by instead referring to the state of the outside world.
Furthermore, this requirement will imply associative inferential privacy for adversaries whose background knowledge is limited to knowing the population.
\mct{Ideally, a proof would be provided.}

Indirect use privacy prevents the release of any output that is associated with the sensitive attribute, even if that release is not caused by the sensitive attribute.
For example, suppose the sensitive attribute $X$ is whether someone has cancer.
Indirect use privacy would prevent the release of whether someone smokes, which is typically public knowledge, as the output $O$ since it is associated with getting cancer.

Generally, the population $B_P$ might not be completely known to whoever designs the system.
Guaranteeing indirect use privacy, in this case, requires meeting the definition for all populations $B_P$ that the designer believes to be possible.
In the extreme case, this would require protecting the information for all populations $B_P$, leading to the same impossibility result as for associative inferential privacy.

Perhaps due to how strong this requirement is,
we do not know of any work that attempts to achieve this requirement for a comprehensive class of populations $B_P$.
However, some prior works can be viewed as can be viewed as approximations of this requirement.
We will discuss two such approximations, one for privacy in general and one related to differential privacy in particular.

\paragraph{Proxies}
Prior work has prohibited the direct use of \emph{proxies},
which are variables statistically associated with the sensitive
attribute $X$ (e.g.,~\cite{datta17ccs}), for example ZIP Codes and race.
Directly using a variable that is associated with $X$ is
required to violate indirect use privacy assuming that the system
cannot predict the value of $X$ \emph{a priori}.
However, proxies may be used as inputs without introducing associations between the outcomes
and sensitive attributes, such as the use of ZIP Code to target  geographic locations
that may have heterogeneous racial demographics. Identifying
the use of proxies in a system exposes points at which normative
judgments can be made about the acceptable use of proxies.

\paragraph{Inferential Guarantees from Differential Privacy}

Ghosh and Kleinberg reason about what an adversary could learn from a
differentially private data release when the adversary's knowledge is
characterized by the correlations in the underlying
population~\cite{ghosh17itcs}.
In our notation, this corresponds to assuming that $B_K$ is determined
by $B_P$.
They show that the amount of associative inferential privacy lost is
bounded when the correlations in $B_P$ are bounded in particular ways.

\paragraph{Definitions Going Beyond Differential Privacy.}

Recall that Proposition~\ref{prp:dp-implies-assoc-inf-priv}
showing that differential privacy
implies a form of associative inferential privacy under the
pre-condition that the data points are independent under the
adversary's background knowledge $B_K$.
That pre-condition is not optional:
numerous works have noted that the output of a differentially private
function can allow an adversary to draw an inference about a single
data point due to output still being strongly associated (more so than
by a factor of $e^\epsilon$) with the data point
(e.g.,~\cite{kifer11sigmod,kifer12pods,kifer14database,he14sigmod,chen14vldbj,zhu15tifs,liu16ndss}).
Some of these authors have responded by proposing definitions that are
similar to differential privacy but requiring that a larger set of
databases are treated as neighboring~\cite{kifer12pods,kifer14database,chen14vldbj,zhu15tifs,liu16ndss}.
This means that the algorithm will have to produce nearly identical
(within a factor of $e^\epsilon$) distributions over outputs for
more pairs of databases.
Intuitively, these additional pairs of databases are ones that differ
not in the data point $\inter{D}_i$ currently under consideration, but
rather other data points $\inter{D}_j$ that are closely associated
with $\inter{D}_i$.
By requiring the output to not depend much upon these associates, such
definitions are similar to proxy prohibitions where the correlated
data points $D_j$ are treated as the prohibited proxies.
By prohibiting their use, these definitions attempt to avoid a strong
association Between the output and a single data point.
Between this motivation and being implemented as a restriction over
outputs, these definitions are similar to indirect use privacy in the
same sense as proxy prohibitions are similar.

\subsection{Causal Inferential Privacy}

We have seen Bayesian associative notions of privacy and frequentist causal ones.
This raises the question of what Bayesian causal ones exist.
Intuitively, such causal epistemic properties require that
intervening on the sensitive attribute does not change significantly an
observer's beliefs about any proposition $\varphi$ about the background.

A trivial way of getting such a requirement is to replace the
frequencies in the causal notion of privacy found in
Section~\ref{sec:direct-use-privacy} \eqref{eqn:direct-use-privacy}
with Bayesian probabilities.
Given Pearl's preference for using Bayesian
probabilities~\cite{pearl09book}, such a replace would bring our causal
notion into closer correspondence to his model of causation.

However, doing such a wholesale replace ignores the distinction
between probabilities measuring properties of a population $B_P$ and
properties of the adversary's background knowledge $B_K$.
A more nuanced approach would use both forms of probabilities,
combining frequencies, credences, causal interventions.
We leave this effort to future work.

\paragraph{Semantic (Differential) Privacy}

Kasiviswanathan and Smith provide a ``semantic'' version of
differential privacy, which they call
\emph{semantic  privacy}~\cite{kasiviswanathan14jpc}.
It requires that the probability that the adversary assigns to the all
input data points does not change much whether an individual $i$ submits
data or not.
While they did not express their definition in terms of causation, we
conjuncture that this definition could be expressed in such terms,
using a combination of frequencies and credences.
Indeed, while they do not formally distinguish between frequencies and credences,
their notation suggests such a distinction as they classify some probabilities as coming
from the algorithm (their $\Pr$) and
others from the adversary's beliefs (their $b$).
Kasiviswanathan and Smith prove that differential privacy and
semantic (differential) privacy are closely
related~\cite[Thm.~2.2]{kasiviswanathan14jpc},
and such a result should also apply to our causal view.

\section{Nondiscrimination}
\label{sec:nondiscrimination}

Similar properties to the ones discussed above appear as
non\-dis\-crim\-i\-nat\-ion properties.
For example, \emph{group parity}, a nondiscrimination property
that the frequency of individuals hired from a protected group be similar to
the frequency of individuals hired from the rest of the population,
is an associative notion.
Furthermore,
the use of a protected attribute such as race or gender can be viewed as
a causal property, and is an important part of the concept of disparate
treatment.

Much as differential privacy served as a point of reference for
understanding the forms of privacy properties, we use the U.S.\@ legal
notions of disparate treatment and disparate impact here.
However, whereas, differential privacy fit squarely under one of our
forms of privacy, the situation is more complex here due to the
complexity of the law.

\subsection{Direct Nondiscrimination: Frequentist Causal Irrelevance}

The first kind of nondiscrimination requirements we examine is
\emph{direct nondiscrimination}, which requires that some protected
attribute not be used to produce some output.
As with direct use privacy, we formalize non-use as Pearl's notion of causal irrelevance.
That is,
absolute direct nondiscrimination holds iff for all $o$, $x_1$, $x_2$, and $B_P$,
\[ \Fr[\inter{O}\equals o \given \isub{X}{x_1}SE, B_P]
 = \Fr[\inter{O}\equals o \given \isub{X}{x_2}SE, B_P]
 \]
where
$SE = \{\inter{O}{=}\prog(\inter{X},\inter{A}), \inter{X}{=}X, \inter{A}{=}A\}$
as usual.
This property is identical absolute direct use privacy.
However, for privacy, the variable $X$ would intuitively be some attribute kept
private, perhaps a medical diagnosis, whereas
for nondiscrimination, the variable $X$ would intuitively be some
protected attribute.
For example, $X$ could be gender or race, which are typically
publicly known.

\paragraph{Disparate Treatment}

The above formulation is related to the legal notion of \emph{disparate impact}.
In U.S. law, Section~703 of Title~VII of the Civil Rights Act of 1964 (codified as 42~U.S.C.~§2000e-2) contains
\begin{quote}
(a) It shall be an unlawful employment practice for an employer --
\begin{itemize}
\item[(1)] to fail or refuse to hire or to discharge any individual, or otherwise to discriminate against any individual with respect to his compensation, terms, conditions, or privileges of employment, because of such individual's race, color, religion, sex, or national origin; or [\ldots]
\end{itemize}
\end{quote}
Disparate treatment is one way in which this provision could be violated.
The other is \emph{disparate impact}, which we turn to next.
The U.S.~Supreme Court has described \emph{disparate treatment} as follows~\cite[Footnote~15]{stewart77scotus}:
\begin{quote}
``Disparate treatment'' such as is alleged in the present case is the most easily understood type of discrimination. The employer simply treats some people less favorably than others because of their race, color, religion, sex, or national origin. Proof of discriminatory motive is critical, although it can in some situations be inferred from the mere fact of differences in treatment.
\end{quote}
This definition leaves some ambiguity about whether unintentional but
causal discrimination counts as disparate treatment.
For example, suppose an employer harbors a subconscious basis against
women that causes him to not hire them despite being unaware of it.
Arguably, sex would be the cause of less favorable treatment, but
without intent.

We suspect that such ambiguity has not posed problems in legal cases for two reasons.
First, proving, outside of a laboratory setting,
the existence of such a subconscious bias and its role in decision
making is so difficult that such cases are unlikely to arise due to a lack of evidence.
Second, these difficulties do not apply to showing disparate impact,
providing plaintiffs with an easier alternative course of action in
such cases, which means that few would feel tempted to push the
envelope on what counts as disparate treatment.

Turning to automated systems, this ambiguity may need to be
addressed.
Some have argued that machines can, in a sense, have intent
(e.g.,~\cite{dennett87book}), but, arguably, they cannot.
With this in mind, perhaps, the notion of disparate treatment is best
limited to humans.
However, we are interested in the case where a system uses protected
information about a person to select an outcome, that is, systems in
which the protected output causes the outcome.
Such cases are violations of direct nondiscrimination and
would be illegal under the quotes above given a causal
interpretation of the word \emph{because} found in them both.
While this interpretation loses some of the nuances of intent, we
believe it maintains the essence of the stance that gives raise to it:
no one should be treated poorly as an effect their protected attributes.

Under this view, at a high level, showing disparate treatment requires
establishing direct discrimination.
Experimental methods such as randomized controlled experiments and
situation testing over populations can establish such causation.

In practice, disparate treatment cases look little like scientific studies
establishing causation for two reasons.
First, running such experiments over actual workplaces is typically
impossible except in limited circumstances (see, e.g.,~\cite{romei14survey}).
Second, winning a disparate treatment case is more complex than
just showing direct discrimination.
For one, nondiscrimination laws only cover certain entities making
certain types of decisions.
Further complicating matters, exceptions exist.
For example, a gym will typically be able to defend hiring only
females to attend its women's locker room as a
\emph{bona fide occupational qualification}, which permits an
exception to the prohibitions against employment discrimination.
Furthermore, showing direct discrimination yields little
reward in \emph{mixed-motives} cases.
For example, the defendant may show that
even if it had not discriminated against the plaintiff, the plaintiff
would still not have been hired for some other legal reason, such as
lacking a required quantification.
In such cases, under U.S.~law, the defendant does not receive damages
(42~U.S.C.~\S{}2000e-5(g)(2)(B)).

For these reasons, legal formulations and case law surrounding disparate
treatment centers around dis\-crim\-i\-nat\-ion against individuals~\cite[Chapter
10.1]{united2008ninth}, where a plaintiff claims that a protected attribute was
a sole reason or motivating factor for some adverse action against them
and had it not been for the protected attribute the adverse action
would not have happened.
(Meanwhile, the defendant typically claims that this is not the case and
offers other lawful reasons that supposedly motivated the adverse action.)
Showing that the adverse action would not have happened to a particular
plaintiff (not just a claim about a population of plaintiffs) means
showing not just causation but the more complex property called
\emph{actual causation}.
(What we are calling \emph{causation} is more precisely known as
\emph{type causation} when contrasting it with actual causation.)
Actual causation is formally identified using analytical methods
(e.g.,~\cite{halpern16book}).  Practically,
this relies on indirect ways of establishing causation such as verbal claims or
inconsistencies in applying rules~\cite{guerin-disparate-treatment}.

For these reasons, we only claim that the core essence of disparate
treatment is reduced to a question of causation.

\subsection{Indirect Discrimination: Frequentist Associative Independence}

Another form of discrimination happens when
a policy or system has disproportionate impacts upon one group by lacking
group parity.
We call this \emph{indirect discrimination}.
It is defined in a similar manner as \emph{indirect use privacy}:
the output is not statistically associated with the sensitive attribute $X$.
Absolute indirect nondiscrimination holds iff there is frequentist
associative independence:
for all $o$, $x_1$, $x_2$, and $B_P$,
\[ \Fr[\inter{O}\equals o \given \inter{X}\equals x_1, SE, B_P]
 = \Fr[\inter{O}\equals o \given \inter{X}\equals x_2, SE, B_P] \]
where
$SE = \{\inter{O}{=}\prog(\inter{X},\inter{A}), \inter{X}{=}X, \inter{A}{=}A\}$
as usual.
As before, $X$ is intuitively a protected attribute, such as gender or
race, instead of a private attribute.

Preventing indirect discrimination (ensuring group parity) is the
basis for a number technical approaches for nondiscrimination in
machine
learning~\cite{zliobaite11,kamishima12euromlkdd,zemel13icml,feldman15kdd}.

\paragraph{Disparate Impact.}

Whereas the disparate treatment form of illegal discrimination looked at
employment practices that directly treat protected groups
differently from others, disparate impact is a form of illegal
discrimination that can arise when even facially neutral policies lead
to indirect discrimination.
In employment law, the 80~percent rule~\cite{burger71scotus} is used to test
whether an employer's selection system has a disproportionate impact on protected
groups by measuring the ratio of positive outcomes across groups. If the ratio
is less than $0.8$, then the employer may be called upon to explain this
disparity.

\begin{prp}
  The 80~percent rule is satisfied by a system iff it has $\epsilon$-frequentist associative independence for protected groups, where $e^{-\epsilon} = 0.8$.
\end{prp}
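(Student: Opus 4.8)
The plan is to unfold both sides of the biconditional into statements about group selection rates and observe that they coincide once $e^{-\epsilon}=0.8$. Fix the positive outcome $o$ (``hired'') and, for each group $x$ in the range of $\inter{X}$, abbreviate its selection rate by $p_x = \Fr[\inter{O}\equals o \given \inter{X}\equals x, SE, B_P]$. First I would record what the 80~percent rule asserts in this notation: it is satisfied exactly when, for every pair of groups, the smaller selection rate is at least $0.8$ times the larger, i.e.\ $p_{x_1} \geq 0.8\, p_{x_2}$ for all ordered pairs $x_1, x_2$ (the reverse comparison being automatic, since the larger rate always exceeds $0.8$ times the smaller). Equivalently, $\min_x p_x / \max_x p_x \geq 0.8$, which is the standard four-fifths comparison of each group against the most-favored group.

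Next I would unfold $\epsilon$-frequentist associative independence, the frequentist instance of Definition~\ref{dfn:ar-assoc} with $B = B_P$, for this outcome $o$: it requires $p_{x_1} \leq e^{\epsilon}\, p_{x_2}$ for all ordered pairs $x_1, x_2$ with $p_{x_1}, p_{x_2}$ well defined. The key move is that the quantifier ranges over \emph{ordered} pairs, so swapping the roles of $x_1$ and $x_2$ turns the one-sided factor into a symmetric band, $e^{-\epsilon}\, p_{x_2} \leq p_{x_1} \leq e^{\epsilon}\, p_{x_2}$ for all $x_1, x_2$.

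Substituting $e^{-\epsilon} = 0.8$ (hence $e^{\epsilon} = 1.25$) finishes both directions. For the forward direction, associative independence gives the lower bound $p_{x_1} \geq 0.8\, p_{x_2}$ for every ordered pair, which is precisely the 80~percent rule. For the converse, the 80~percent rule supplies $p_{x_1} \geq 0.8\, p_{x_2}$ for all ordered pairs; applying it to the swapped pair yields $p_{x_2} \geq 0.8\, p_{x_1}$, i.e.\ $p_{x_1} \leq 1.25\, p_{x_2} = e^{\epsilon}\, p_{x_2}$, recovering the associative-independence inequality. Thus the two conditions are literally the same constraint on the selection rates.

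I expect the only real subtlety to be bookkeeping about conventions rather than mathematics: correctly translating the legal ``ratio $\geq 0.8$'' (which compares each group to the most-favored group) into the all-ordered-pairs form, and matching the direction of the inequality in Definition~\ref{dfn:ar-assoc} so that the threshold $e^{-\epsilon}$, and not $e^{\epsilon}$, is the $0.8$ appearing in the rule. A secondary point worth one sentence is scope: the 80~percent rule constrains only the positive (selection) outcome, so the comparison should be read as being over that single outcome $o$ rather than ranging over every value of $\inter{O}$, since under a binary hired/not-hired coding the band for the complementary outcome need not coincide with the band for $o$. I would therefore state explicitly that the proposition concerns the selection rate for the fixed outcome $o$, after which the argument is pure substitution.
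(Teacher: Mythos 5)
Your proposal is correct and is essentially the argument the paper relies on: the paper states this proposition without an explicit proof, treating it as a direct unfolding of the four-fifths test and of the frequentist instance of Definition~\ref{dfn:ar-assoc}, which is exactly what you carry out --- the ordered-pair quantification symmetrizes the one-sided $e^{\epsilon}$ bound into the band $e^{-\epsilon} p_{x_2} \le p_{x_1} \le e^{\epsilon} p_{x_2}$, and substituting $e^{-\epsilon} = 0.8$ makes the lower bound literally the 80~percent rule. One remark: your closing caveat about scope is not mere bookkeeping but is needed for the biconditional to be true at all. If $\epsilon$-associative independence is read as ranging over every value of $\inter{O}$ (as Definition~\ref{dfn:ar-assoc} literally does), the claim fails for binary outcomes: with selection rates $p_{x_1} = 0.8$ and $p_{x_2} = 1$ the 80~percent rule is satisfied, yet the complementary (not-hired) outcome would require $0.2 \le e^{\epsilon}\cdot 0$, which is false. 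So the proposition must indeed be read, as you state explicitly, over the fixed positive outcome $o$ only, and under that reading your proof is complete.
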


Failing the 80~percent rule does not, in itself, mean that the
employer will be found liable for disparate impact.
As with disparate treatment, only certain entities are covered and
exceptions exist.
For example, the employer will not be held liable if the practice with
a disparate impact is a \emph{business necessity}.
Thus, similarly, we do not claim that all of disparate impact can be
reduced to a statistical test, but its core concept can be.

\section{Results for Free}
\label{sec:transfers}

Having recognized the relationship between privacy and
nondiscrimination, we get theorems and methods from one for the other
at no additional cost. This motivates the case for the exchange
of techniques between the communities studying the two independently.
In particular, the
Dwork--Naor impossibility result for statistical disclosure~\cite{dwork08jpc} translates
to an impossibility result for fairness that states that there will always exist
a subpopulation for which a system violates group parity. Further, we identify
techniques in probabilistic program analysis and data privacy geared towards minimizing quantitative
information flow that can be used to enforce nondiscrimination.
Finally, we mention ideas for generalizing the definitions found in
this work that can apply to either value.

\subsection{Dwork--Naor's Impossibility Result}
\label{sec:imposs}

Dwork and Naor showed the impossibility of statistical nondisclosure
(associative inferential privacy) when releasing useful
outputs~\cite{dwork08jpc}.
That theorem will carry over for discrimination.

They write~\cite[p.~1]{dwork08jpc}:
\begin{quote}
The intuition behind the proof of impossibility is captured by the following parable. Suppose one's exact height were considered a sensitive piece of information, and that revealing the exact height of an individual were a privacy breach. Assume that the database yields the average heights of women of different nationalities. An adversary who has access to the statistical database and the auxiliary information ``Terry Gross is two inches shorter than the average Lithuanian woman'' learns Terry Gross' height, while anyone learning only the auxiliary information, without access to the average heights, learns relatively little.
\end{quote}
Generalizing this parable to sensitive conditions other than Terry
Gross' height and to computations other than the average, requires
characterizing the sensitive conditions, computations, and
adversaries for which such reasoning holds.
Dwork and Naor do so for a fairly general characterization.
We will do so for a narrower but simpler characterization, which
allows making the points we wish to make more straightforward.

To understand our characterization, first note that the reasoning in the
parable requires that the sensitive condition is not already known to
the adversary.
So a straightforward way of generalizing it would be
\begin{quote}
For all informative computations $s$, outputs $o$, and sensitive conditions $\varphi$,
there exists some background condition $B$ such that $\varphi$ is not known from $B$ and $SE$,
but $\varphi$ will become known upon seeing that the output is $o$.
\end{quote}

The auxiliary information that ``Terry Gross is two inches shorter than
the average Lithuanian woman'' can be generalized to $\inter{O}{=}o$
implies $\varphi$ (or $\neg\varphi$ if $\varphi$ is actually false).
However, this does not hold.
The reason is that such auxiliary knowledge along with $SE$ (which is
assumed to be known by all) can alone imply that $\varphi$ holds.
One reason this is possible is that $\varphi$ can just follow from $SE$.
Alternatively, $O{=}o$ might follow from $SE$.
More complicatedly, $\varphi$ might be known to hold for every output
other than $o$ from $SE$.
In this case, the auxiliary information covers the only case where
$\varphi$ was not known to hold, implying that $\varphi$ holds without
needing to see the output.
(Appendix~\ref{app:ce} provides such a case.)

Rather than consider the possibility that some more complex auxiliary
information could deal with this case, we instead prove a more
restricted theorem.  Intuitively, it is
\begin{quote}
For all informative computations $s$, outputs $o$, and
sensitive conditions $\varphi$ whose truth or falsehood is not simply known from any of the outputs,
there exists some auxiliary information $B$ such that $\varphi$ is not known from $B$ and $SE$,
but $\varphi$ will become known upon seeing that the output is $o$.
\end{quote}

Since our focus is showing how privacy and nondiscrimination are related,
rather prove this result for just an epistemic notion of probability,
we provide a generic proof working for frequentist or Bayesian probabilities.

To do so, we generalize auxiliary information to be any
\emph{background condition}.
We also introduce the terms \emph{closed} and \emph{open} to
generalize \emph{known} and \emph{unknown}.
Section~\ref{sec:results} provides the generic results while
Sections~\ref{sec:privacy-interpretation} and~\ref{sec:nondiscrimination-interpretation} discuss how to interpret these
results for privacy and nondiscrimination.

\subsubsection{Generic Results}
\label{sec:results}

We start by making the concepts mentioned above precise.

We again use $B$ for background information, but since we do not use
causal interventions for these results, $B$ is not limited background
variables.
Let $\varphi$ be \emph{closed} for $B$ iff
$B$ is consistent and $\Pr[\varphi \given B]$ is $0$ or $1$.
Since we are working with discrete probabilities, this is the same as $B$ either proving or disproving $\varphi$.
If $\varphi$ is closed for $B$, then $\neg\varphi$ is closed for $B$.

Let $\varphi$ be \emph{open} for $B$ iff $B$ is consistent and $0 < \Pr[\varphi \given B] < 1$.
If $\varphi$ is open for $B$, then $\neg\varphi$ is open for $B$.
If $\varphi$ is open for $B$, then $\both{\varphi}{B}$ is consistent since $\Pr[\varphi \given SE] > 0$.

For consistent $B$, $\varphi$ is either open for $B$ or closed for $B$, but not both.

We say that a system is \emph{uninformative} if there exists an
output $o$ such that $\Pr[\inter{O}{=}o \given SE] = 1$.
In this case, there is no point to observing its output since its
output was already known from $SE$.
A system might be uninformative because it is a constant
function or because only a single input pair is possible:
$\Pr[X{=}x \land A{=}a \given SE] = 1$ for some $x$ and $a$.
We say the system is \emph{informative} is it is not uninformative:
for all $o$, $\Pr[\inter{O}{=}o \given SE] < 1$.

We say that a system $s$ can \emph{trivially close} $\varphi$ given $SE$ iff
$\varphi$ is open for $SE$, but
$s$ can produce an output $o$ that shows that either $\varphi$ or $\neg\varphi$ holds.
That is, for some $o$,
$\varphi$ is open for $SE$ but closed for $\both{\inter{O}{=}o}{SE}$.
Even without considering the background information of an adversary,
such a system closes $\varphi$ for at least one output.

\begin{thm}\label{thm:imposs}
Consider $SE = \{\inter{O}{=}s(\inter{X},\inter{A}), \inter{X}{=}X, \inter{A}{=}A\}$.
For all conditions $\varphi$, one of the following is true:
 $\varphi$ is closed for $SE$,
 $s$ is uninformative,
 $s$ can trivially close $\varphi$ given $SE$, or
 for all $o$, $\varphi$ is open for $\both{\inter{O}{\neq}o \lor \varphi}{SE}$ but
  closed for $\both{\inter{O}{=}o, \inter{O}{\neq}o \lor \varphi}{SE}$.
\end{thm}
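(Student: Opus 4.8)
The plan is to prove the statement in its contrapositive form: assuming the first three disjuncts all fail, derive the fourth. So I would assume $\varphi$ is \emph{not} closed for $SE$ (hence open for $SE$, i.e.\ $0 < \Pr[\varphi \given SE] < 1$), that $s$ is \emph{informative} ($\Pr[\inter{O}{=}o \given SE] < 1$ for every $o$), and that $s$ \emph{cannot} trivially close $\varphi$, which---since $\varphi$ is open for $SE$---unwinds to the statement that $0 < \Pr[\varphi \given \inter{O}{=}o, SE] < 1$ for every achievable output $o$ (those with $\Pr[\inter{O}{=}o \given SE] > 0$). The goal is then the fourth disjunct for each such $o$. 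Throughout I would abbreviate $C = \inter{O}{\neq}o \lor \varphi$ and lean on two propositional simplifications: $\varphi \land C \equiv \varphi$, and $\inter{O}{=}o \land C \equiv \inter{O}{=}o \land \varphi$ (the $\inter{O}{\neq}o$ disjunct being killed by $\inter{O}{=}o$).

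For the ``closed'' half, the second simplification shows that conditioning on $\both{\inter{O}{=}o, C}{SE}$ is the same as conditioning on $\both{\inter{O}{=}o \land \varphi}{SE}$, under which $\varphi$ holds with probability $1$. The only thing to verify is that this background is consistent, i.e.\ $\Pr[\inter{O}{=}o \land \varphi \given SE] > 0$; but this is exactly the lower bound $\Pr[\varphi \given \inter{O}{=}o, SE] > 0$ supplied by the failure of trivial closure, so the ``closed'' half follows for every achievable $o$.

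For the ``open'' half, the first simplification gives $\Pr[\varphi \given C, SE] = \Pr[\varphi \given SE] / \Pr[C \given SE]$. The numerator is positive by openness of $\varphi$ for $SE$, and $\varphi \Rightarrow C$ forces $\Pr[C \given SE] \ge \Pr[\varphi \given SE] > 0$, so the ratio is strictly positive and $C$ is a consistent background. The remaining inequality $\Pr[\varphi \given C, SE] < 1$ is equivalent to $\Pr[C \given SE] > \Pr[\varphi \given SE]$, i.e.\ to $\Pr[\neg\varphi \land \inter{O}{\neq}o \given SE] > 0$. This is the crux, and I expect it to be the main obstacle. I would argue by contradiction: if $\Pr[\neg\varphi \land \inter{O}{\neq}o \given SE] = 0$, then all of the (positive) $\neg\varphi$-mass sits on $\inter{O}{=}o$. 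Informativeness supplies some other achievable output $o' \neq o$ with $\Pr[\inter{O}{=}o' \given SE] > 0$; since $\inter{O}{=}o'$ entails $\inter{O}{\neq}o$, it carries no $\neg\varphi$-mass, so $\Pr[\varphi \given \inter{O}{=}o', SE] = 1$. But then $\varphi$ is open for $SE$ yet closed for $\both{\inter{O}{=}o'}{SE}$, which is precisely $s$ trivially closing $\varphi$---contradicting the assumed failure of the third disjunct. Hence the inequality holds and the ``open'' half follows.

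I would finish by flagging one bookkeeping point worth stating: ``for all $o$'' should be read over the achievable outputs (the support of $\inter{O}$ under $SE$), since otherwise the background $\both{\inter{O}{=}o, C}{SE}$ in the ``closed'' clause is inconsistent and the notion of closed does not apply. For unachievable $o$ the ``open'' clause still goes through trivially, because there $\Pr[C \given SE] = 1$ and so $\Pr[\varphi \given C, SE] = \Pr[\varphi \given SE] \in (0,1)$.
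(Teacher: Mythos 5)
Your proposal is correct, and it reaches the theorem by a leaner decomposition than the paper's. The paper proves two standalone lemmas and chains them: failure of trivial closure gives openness of $\varphi$ for $\both{\inter{O}{=}o'}{SE}$ at every output; Lemma~\ref{lem:open-eq-open-neq} (using informativeness) transfers this to openness for $\both{\inter{O}{\neq}o}{SE}$; and Lemma~\ref{lem:open-disj} transfers that to openness for $\both{\inter{O}{\neq}o \lor \varphi}{SE}$; the closed half is then the same propositional collapse $\inter{O}{=}o \land (\inter{O}{\neq}o \lor \varphi) \equiv \inter{O}{=}o \land \varphi$ that you use. You instead work in contrapositive form, use $\varphi \land C \equiv \varphi$ to get the exact identity $\Pr[\varphi \given C, SE] = \Pr[\varphi \given SE]/\Pr[C \given SE]$ (which subsumes Lemma~\ref{lem:open-disj} in this instance), and isolate the entire difficulty in one crux inequality, $\Pr[\neg\varphi \land \inter{O}{\neq}o \given SE] > 0$, dispatched by contradiction: if it failed, some achievable $o' \neq o$ would carry no $\neg\varphi$-mass, making $\varphi$ closed for $\both{\inter{O}{=}o'}{SE}$ and hence trivially closable. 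Note your contradiction needs strictly less than the paper's Lemma~\ref{lem:open-eq-open-neq} establishes (one output with $\Pr[\varphi \given \inter{O}{=}o', SE] < 1$ rather than full openness of the complement event), so your route is genuinely shorter, at the cost of the modularity of the paper's reusable lemmas. You are also more careful on two points the paper glosses over: you verify consistency of the conditioning event $\inter{O}{=}o \land \varphi$ before concluding the closed half, and you flag that the theorem's ``for all $o$'' must range over achievable outputs --- indeed the paper's claim that non-trivial closure yields openness for $\both{\inter{O}{=}o'}{SE}$ for \emph{all} $o'$ fails when $\Pr[\inter{O}{=}o' \given SE] = 0$, since that background is inconsistent and $\varphi$ is then neither open nor closed for it. That restriction (or an assumption that every output is achievable) is needed for the statement to be literally true, and your handling of it is a genuine improvement.
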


This result follows from a series of lemmas, starting with one about open propositions.

\begin{lem}\label{lem:open-disj}
If %
$\varphi$ is open for $\both{\psi}{B}$,
then $\varphi$ is open for $\both{\varphi \lor \psi}{B}$.
\end{lem}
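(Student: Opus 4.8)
The plan is to unfold both occurrences of \emph{open} into elementary statements about the probability $\Pr[\cdot \given B]$ and then verify the two required strict inequalities by a one-line computation. Throughout I would work relative to $B$, which is legitimate because openness of $\varphi$ for $\both{\psi}{B}$ presupposes that $\both{\psi}{B}$, and hence $B$, is consistent.

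First I would record what the hypothesis supplies. Saying $\varphi$ is open for $\both{\psi}{B}$ means $0 < \Pr[\varphi \given \psi, B] < 1$ together with consistency of $\both{\psi}{B}$, i.e.\ $\Pr[\psi \given B] > 0$. Rewriting the conditional probability, this is equivalent to $0 < \Pr[\varphi \land \psi \given B] < \Pr[\psi \given B]$. From this I extract the two facts I actually use: $\Pr[\varphi \land \psi \given B] > 0$, and $\Pr[\neg\varphi \land \psi \given B] = \Pr[\psi \given B] - \Pr[\varphi \land \psi \given B] > 0$.

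Next I would compute the target quantity. Using the logical identity $\varphi \land (\varphi \lor \psi) = \varphi$, one has $\Pr[\varphi \given \varphi \lor \psi, B] = \Pr[\varphi \given B]/\Pr[\varphi \lor \psi \given B]$. The lower bound is immediate: the numerator satisfies $\Pr[\varphi \given B] \geq \Pr[\varphi \land \psi \given B] > 0$, which simultaneously shows that $\both{\varphi \lor \psi}{B}$ is consistent (so the conditioning is well defined and the consistency of $B$ carries over) and that $\Pr[\varphi \given \varphi \lor \psi, B] > 0$. For the upper bound I would expand $\Pr[\varphi \lor \psi \given B] = \Pr[\varphi \given B] + \Pr[\neg\varphi \land \psi \given B]$ and invoke $\Pr[\neg\varphi \land \psi \given B] > 0$ to conclude $\Pr[\varphi \lor \psi \given B] > \Pr[\varphi \given B]$, whence the ratio is strictly below $1$. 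Together these give exactly that $\varphi$ is open for $\both{\varphi \lor \psi}{B}$.

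There is no real obstacle here: the content is a short probability manipulation rather than an argument requiring insight. The only points needing care are bookkeeping. One must track which of the two strict inequalities in the hypothesis produces which conclusion --- positivity of $\varphi \land \psi$ drives the lower bound and the consistency of $\both{\varphi\lor\psi}{B}$, while positivity of $\neg\varphi \land \psi$ drives the upper bound --- and one must confirm that the consistency clauses built into the definition of \emph{open} are preserved, so that every conditional probability written down is genuinely well defined.
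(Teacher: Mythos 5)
Your proof is correct and follows essentially the same route as the paper's: both reduce $\Pr[\varphi \given \varphi \lor \psi, B]$ to the ratio $\Pr[\varphi \given B]/\Pr[\varphi \lor \psi \given B]$ and then establish the two strict bounds by showing the numerator is positive and the denominator strictly exceeds it. The only differences are bookkeeping --- you work with joint probabilities and the disjoint decomposition $\Pr[\varphi \lor \psi \given B] = \Pr[\varphi \given B] + \Pr[\neg\varphi \land \psi \given B]$ where the paper uses inclusion--exclusion and the conditional $\Pr[\varphi \given \psi, B]$ --- plus you explicitly verify consistency of $\both{\varphi \lor \psi}{B}$, which the paper leaves implicit.
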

\begin{proof}

We must show that, when the above condition holds,
\begin{align}
0 &< \Pr[\varphi \given \varphi \lor \psi, B] < 1\\
0 &< \frac{\Pr[\varphi \lor \psi \given \varphi, B] * \Pr[\varphi \given B]}{\Pr[\varphi \lor \psi \given B]} < 1\\
0 &< \frac{1 * \Pr[\varphi \given B]}{\Pr[\varphi \given B] + \Pr[\psi \given B] - \Pr[\varphi \land \psi \given B]} < 1\\
0 &< \frac{\Pr[\varphi \given B]}{\Pr[\varphi \given B] + \Pr[\psi \given B] - \Pr[\psi \given B] * \Pr[\varphi \given \psi, B]} < 1 \label{ln:open-disj-ts}
\end{align}

Since $\varphi$ is open for $\both{\psi}{B}$, $0 < \Pr[\varphi \given \psi, B] < 1$.
By Bayes rule,
\begin{align}
0 &< \Pr[\varphi \given \psi, B]\\
0 &< \Pr[\varphi \given B]*\frac{\Pr[\psi \given \varphi, B]}{\Pr[\psi \given B]}
\end{align}
which implies that $0 < \Pr[\varphi \given B]$.
Thus,
\begin{align}
0 &< \frac{\Pr[\varphi \given B]}{\Pr[\varphi \given B] + \Pr[\psi \given B] - \Pr[\psi \given B] * \Pr[\varphi \given \psi, B]}\label{ln:open-disj-g0}
\end{align}

Since %
$0 < \Pr[\varphi \given \psi, B] < 1$,
\begin{align}
\Pr[\psi \given B] * \Pr[\varphi \given \psi, B] &< \Pr[\psi \given B]
\end{align}
Thus,
\begin{align}
0 &< \Pr[\psi \given B] - \Pr[\psi \given B] * \Pr[\varphi \given \psi, B]\\
\Pr[\varphi \given B] &< \Pr[\psi \given B] - \Pr[\psi \given B] * \Pr[\varphi \given \psi, B] + \Pr[\varphi \given B]
\end{align}
and
\begin{align}
\frac{\Pr[\varphi \given B]}{\Pr[\varphi \given B] + \Pr[\psi \given B] - \Pr[\psi \given B] * \Pr[\varphi \given \psi, B]} &< 1\label{ln:open-disj-l1}
\end{align}
Together~\eqref{ln:open-disj-g0} and~\eqref{ln:open-disj-l1} show that~\eqref{ln:open-disj-ts} holds.
Thus, $\varphi$ is open for $\both{\varphi \lor \psi}{B}$.
\end{proof}

We next prove a theorem about systems $s$ that cannot trivially close $\varphi$.
\begin{lem}\label{lem:open-eq-open-neq}
Consider $SE = \{\inter{O}{=}s(\inter{X},\inter{A}), \inter{X}{=}X, \inter{A}{=}A\}$.
For all systems $s$ that are informative,
if for all $o$, $\varphi$ is open for
$\both{\inter{O}{=}o}{SE}$,
then
for all $o$, $\varphi$ is open for
$\both{\inter{O}{\neq}o}{SE}$.
\end{lem}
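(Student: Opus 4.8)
The plan is to read $\Pr[\varphi \given \inter{O}{\neq}o, SE]$ as a convex combination of the quantities $\Pr[\varphi \given \inter{O}{=}o', SE]$ over the outputs $o' \neq o$, each of which lies strictly between $0$ and $1$ by hypothesis, and then invoke the elementary fact that a genuine weighted average of numbers in the open interval $(0,1)$ again lies in $(0,1)$. First I would dispatch the consistency requirement built into openness: since $s$ is informative, $\Pr[\inter{O}{=}o \given SE] < 1$, hence $\Pr[\inter{O}{\neq}o \given SE] > 0$, so $\both{\inter{O}{\neq}o}{SE}$ is consistent and the conditional probability is well defined. The same strict inequality guarantees that at least one value $o' \neq o$ has $\Pr[\inter{O}{=}o' \given SE] > 0$, which is the key nondegeneracy fact used below.

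Next I would expand by the law of total probability over the disjoint events $\inter{O}{=}o'$ for $o' \neq o$, using that $\inter{O}{=}o'$ with $o' \neq o$ already entails $\inter{O}{\neq}o$:
\[
\Pr[\varphi \given \inter{O}{\neq}o, SE]
= \sum_{o' \neq o} \Pr[\varphi \given \inter{O}{=}o', SE]\,\Pr[\inter{O}{=}o' \given \inter{O}{\neq}o, SE].
\]
Here the weights $w_{o'} = \Pr[\inter{O}{=}o' \given \inter{O}{\neq}o, SE]$ are nonnegative, sum to $1$, and are positive for at least one index (terms outside the support of $\inter{O}$ contribute nothing). The hypothesis that $\varphi$ is open for every $\both{\inter{O}{=}o'}{SE}$ gives that each factor $a_{o'} = \Pr[\varphi \given \inter{O}{=}o', SE]$ satisfies $0 < a_{o'} < 1$; note that this hypothesis also forces each conditioning event to be consistent, so the index sets over which the weights and the factors are nonzero coincide.

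Finally I would conclude with the convex-combination bound: the lower bound holds because some product $w_{o'} a_{o'}$ is strictly positive, and the upper bound holds because $w_{o'} a_{o'} < w_{o'}$ for every $o'$ of positive weight, so $\sum_{o'} w_{o'} a_{o'} < \sum_{o'} w_{o'} = 1$; hence $0 < \Pr[\varphi \given \inter{O}{\neq}o, SE] < 1$ and $\varphi$ is open for $\both{\inter{O}{\neq}o}{SE}$. (Equivalently, one may argue by contraposition: if the average were $0$ or $1$, then every $a_{o'}$ in the support would be $0$ or $1$, contradicting openness for the witnessing $o'$ supplied by informativeness.) I do not expect a real obstacle here — the one point requiring care is precisely that informativeness yields a nonempty set of other outputs carrying positive weight, since without it the weighting could collapse to an empty or single-atom distribution and drive the average to an endpoint.
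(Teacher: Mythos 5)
Your proof is correct and takes essentially the same approach as the paper's: the paper likewise decomposes $\Pr[\varphi \given \inter{O}{\neq}o, SE]$ over the outputs $o' \neq o$ (written as a ratio of sums, $\sum_{o' \neq o} \Pr[\varphi \land \inter{O}{=}o' \given SE] \,/\, \sum_{o' \neq o} \Pr[\inter{O}{=}o' \given SE]$, rather than your normalized convex combination), gets the strict upper bound term by term from openness, and gets strict positivity from informativeness. The only cosmetic difference is the nondegeneracy step, where the paper argues that informativeness forces at least two outputs of positive probability, of which at most one can equal $o$, while you obtain $\Pr[\inter{O}{\neq}o \given SE] > 0$ directly from $\Pr[\inter{O}{=}o \given SE] < 1$.
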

\begin{proof}
We must show that, when the above conditions hold, for all $o$,
\begin{align}
0 &< \Pr[\varphi \given \inter{O}{\neq}o, SE] <1\\
0 &< \frac{\Pr[\varphi \land \inter{O}{\neq}o \given SE]}{\Pr[\inter{O}{\neq}o \given SE]} <1\\
0 &< \frac{\Pr[\varphi \land \bigvee_{o' \in \mc{O} \st o' \neq o} \inter{O}{=}o' \given SE]}{\Pr[\bigvee_{o' \in \mc{O} \st o' \neq o} \inter{O}{=}o' \given SE]} <1\\
0 &< \frac{\sum_{o' \in \mc{O} \st o' \neq o} \Pr[\varphi \land \inter{O}{=}o' \given SE]}{\sum_{o' \in \mc{O} \st o' \neq o} \Pr[\inter{O}{=}o' \given SE]} <1 \label{ln:open-eq-open-neq-ts}
\end{align}

Since $\varphi$ is open, for all $o$, $\Pr[\varphi \given O{=}o', SE] < 1$.
Thus,
\begin{align}
\sum_{o' \in \mc{O} \st o' \neq o} \Pr[\varphi \land \inter{O}{=}o' \given SE]
&= \sum_{o' \in \mc{O} \st o' \neq o} \Pr[\varphi \given \inter{O}{=}o', SE]*\Pr[\inter{O}{=}o' \given SE]\\
&< \sum_{o' \in \mc{O} \st o' \neq o} \Pr[\inter{O}{=}o' \given SE]
\end{align}
which implies that
\begin{align}
\frac{\sum_{o' \in \mc{O} \st o' \neq o} \Pr[\varphi \land \inter{O}{=}o' \given SE]}{\sum_{o' \in \mc{O} \st o' \neq o} \Pr[\inter{O}{=}o' \given SE]} &< 1 \label{ln:open-eq-open-neq-l1}
\end{align}

Since $s$ is informative,
for all outputs $o'$,
$\Pr[\inter{O}{=}o' \given SE] < 1$.
Thus,
there must exist at least two outputs $o_1$ and $o_2$ such that
$0 < \Pr[\inter{O}{=}o_i \given SE]$ for $i$ in $\{1,2\}$.
Since, for all $o$, $\varphi$ is open for
$\both{\inter{O}{=}o}{SE}$,
for $o_i \in \{o_1, o_2\}$,
$0 < \Pr[\varphi \given \inter{O}{=}o_i, SE]$.
Since
$0 < \Pr[\varphi \given \inter{O}{=}o_i, SE]$
and
$0 < \Pr[\inter{O}{=}o_i \given SE]$,
\begin{align}
0 &< \Pr[\varphi \given \inter{O}{=}o_i, SE]*\Pr[\inter{O}{=}o_i \given SE]
  = \Pr[\varphi \land \inter{O}{=}o_i \given SE]
\end{align}
This implies that
\begin{align}
0
&< \sum_{o' \in \mc{O} \st o' \neq o} \Pr[\varphi \land \inter{O}{=}o' \given SE]
\end{align}
since at most of one $o_1$ and $o_2$ can be $o$ and for all $o'$, $0 \leq \Pr[\varphi \land \inter{O}{=}o' \given SE]$.
Thus,
\begin{align}
0
&< \frac{\sum_{o' \in \mc{O} \st o' \neq o} \Pr[\varphi \land \inter{O}{=}o' \given SE]}{\sum_{o' \in \mc{O} \st o' \neq o} \Pr[\inter{O}{=}o' \given SE]} \label{ln:open-eq-open-neq-g0}
\end{align}

The inequalities~\eqref{ln:open-eq-open-neq-l1}
and~\eqref{ln:open-eq-open-neq-g0} together imply
that~\eqref{ln:open-eq-open-neq-ts} holds.
Thus, $\varphi$ is open for $\both{\inter{O}{\neq}o}{SE}$.
\end{proof}

Theorem~\ref{thm:imposs} follows directly from the next lemma.
\begin{lem}\label{lem:imposs}
Consider $SE = \{\inter{O}{=}s(\inter{X},\inter{A}), \inter{X}{=}X, \inter{A}{=}A\}$
such that $s$ is informative.
For all conditions $\varphi$ such that
 $\varphi$ is open for $SE$ and
 $s$ cannot trivially close $\varphi$ given $SE$,
for all outputs $o$,
$\varphi$ is open for $\both{\inter{O}{\neq}o \lor \varphi}{SE}$ but closed for $\both{\inter{O}{=}o, \inter{O}{\neq}o \lor \varphi}{SE}$.
\end{lem}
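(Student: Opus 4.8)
The plan is to prove the two halves of the conclusion—openness of $\varphi$ for $\both{\inter{O}{\neq}o \lor \varphi}{SE}$ and its closure for $\both{\inter{O}{=}o, \inter{O}{\neq}o \lor \varphi}{SE}$—separately, after first turning the two hypotheses into a single usable fact. Since $\varphi$ is open for $SE$ and $s$ cannot trivially close $\varphi$ given $SE$, the definition of trivial closure gives that $\varphi$ is not closed for $\both{\inter{O}{=}o}{SE}$ for any $o$. Restricting attention to reachable outputs, i.e.\ those with $\Pr[\inter{O}{=}o \given SE] > 0$, the context $\both{\inter{O}{=}o}{SE}$ is consistent, so $\varphi$ must then be \emph{open} for $\both{\inter{O}{=}o}{SE}$ for every such $o$. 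This is exactly the hypothesis the earlier lemmas require.

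For the openness half, I would chain the two prior lemmas. Because $s$ is informative and $\varphi$ is open for $\both{\inter{O}{=}o}{SE}$ for all $o$, Lemma~\ref{lem:open-eq-open-neq} yields that $\varphi$ is open for $\both{\inter{O}{\neq}o}{SE}$ for every $o$. Applying Lemma~\ref{lem:open-disj} with $\psi = \inter{O}{\neq}o$ and $B = SE$ then immediately gives that $\varphi$ is open for $\both{\varphi \lor \inter{O}{\neq}o}{SE}$, which is the first conclusion.

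For the closure half, I would argue by a purely propositional simplification. The conjunction $\inter{O}{=}o \land (\inter{O}{\neq}o \lor \varphi)$ is equivalent to $\inter{O}{=}o \land \varphi$, so $\both{\inter{O}{=}o, \inter{O}{\neq}o \lor \varphi}{SE}$ agrees with $\both{\inter{O}{=}o, \varphi}{SE}$ as a conditioning context, under which $\Pr[\varphi \given \cdots] = 1$. Consistency of this context is the only thing left to check: openness of $\varphi$ for $\both{\inter{O}{=}o}{SE}$ gives $\Pr[\varphi \given \inter{O}{=}o, SE] > 0$, and together with $\Pr[\inter{O}{=}o \given SE] > 0$ this yields $\Pr[\inter{O}{=}o \land \varphi \given SE] > 0$. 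Hence $\varphi$ is closed (with value $1$) for the stated context, completing the proof; Theorem~\ref{thm:imposs} then follows by the earlier case split.

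I expect the propositional simplification and the probability bookkeeping to be routine. The step needing genuine care is the first one: correctly reading off ``$\varphi$ is open for $\both{\inter{O}{=}o}{SE}$ for all $o$'' from the failure of trivial closure, and in particular confining the argument to reachable outputs so that each intermediate context is consistent and the hypothesis of Lemma~\ref{lem:open-eq-open-neq} is honestly satisfied rather than vacuously invoked.
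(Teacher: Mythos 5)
Your proof follows the paper's own argument exactly: it derives openness of $\varphi$ for each $\both{\inter{O}{=}o}{SE}$ from the failure of trivial closure, chains Lemma~\ref{lem:open-eq-open-neq} and then Lemma~\ref{lem:open-disj} for the openness half, and uses the propositional collapse of $\inter{O}{=}o \land (\inter{O}{\neq}o \lor \varphi)$ to $\inter{O}{=}o \land \varphi$ for the closure half. The only difference is that you explicitly verify consistency of the conditioning contexts (restricting to reachable outputs and checking $\Pr[\inter{O}{=}o \land \varphi \given SE] > 0$), a bookkeeping step the paper's proof leaves implicit.
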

\begin{proof}

Since
$\varphi$ is open for $SE$ and
$s$ cannot trivially close $\varphi$ given $SE$,
$\varphi$ is open for $\both{\inter{O}{=}o'}{SE}$
for all $o'$ including $o$.
Thus, since $s$ is informative,
Lemma~\ref{lem:open-eq-open-neq} applies and
$\varphi$ is open for $\both{\inter{O}{\neq}o'}{SE}$
for all $o'$ including $o$.
Since
$\varphi$ is open for $\both{\inter{O}{\neq}o}{SE}$,
Lemma~\ref{lem:open-disj} applies and
$\varphi$ is open for $\both{\varphi \lor \inter{O}{\neq}o}{SE}$.

On the other hand,
\begin{align}
\Pr[\varphi \given \varphi \lor \inter{O}{\neq}o, \inter{O}{=}o, SE]
&= \Pr[\varphi \given \varphi, \inter{O}{=}o, SE]
= 1
\end{align}
Thus, $\varphi$ is closed for $\both{\inter{O}{=}o, \inter{O}{\neq}o \lor \varphi}{SE}$.
\end{proof}

We can instantiate Theorem~\ref{thm:imposs} for either privacy or
nondiscrimination, as we do so below.

\subsubsection{Privacy Interpretation}
\label{sec:privacy-interpretation}

For privacy, Theorem~\ref{thm:imposs} corresponds to saying that there
always exist an adversary who will prevent inferential privacy of the
statistical nondisclosure form.

More precisely, for privacy,
let \emph{resolved} from $B$ mean known to be true or known to be
false from just $B$.
We say that someone \emph{learns} a proposition $\varphi$ from an
observation if $\varphi$ is unresolved from that entity's background
knowledge but is resolved from the entity's background knowledge with
observation added to it.
We say that $s$ \emph{can trivially resolve} $\varphi$ given $SE$ if
$\varphi$ is unresolved for $SE$ but is for the combination of an
output of $s$ and $SE$.

We can restate Theorem~\ref{thm:imposs} as
\begin{cor}\label{cor:imposs-privacy}
Consider $SE = \{\inter{O}{=}s(\inter{X},\inter{A}), \inter{X}{=}X, \inter{A}{=}A\}$.
For all propositions $\varphi$, one of the following is true:
 $\varphi$ is resolved from $SE$,
 $s$ is uninformative,
 $s$ can trivially resolve $\varphi$ given $SE$, or
 for all $o$, $\varphi$ is unresolved for $\both{\inter{O}{\neq}o \lor \varphi}{SE}$ but
  resolved for $\both{\inter{O}{=}o, \inter{O}{\neq}o \lor \varphi}{SE}$.
\end{cor}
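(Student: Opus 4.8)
The plan is to recognize that Corollary~\ref{cor:imposs-privacy} is simply Theorem~\ref{thm:imposs} restated in the privacy vocabulary introduced immediately above it, so the proof reduces to checking that each new term is definitionally identical to its formal counterpart and then invoking the theorem. The three correspondences to verify are: \emph{resolved} from $B$ with \emph{closed} for $B$, \emph{unresolved} for $B$ with \emph{open} for $B$, and ``$s$ \emph{can trivially resolve} $\varphi$'' with ``$s$ \emph{can trivially close} $\varphi$''.

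First I would line up the resolved/closed pair. A proposition is resolved from $B$ exactly when it is known to be true or known to be false from $B$; by the earlier definition this is precisely $\Pr[\varphi \given B] \in \{0,1\}$, which is the definition of closed, with ``known true'' matching the value $1$ and ``known false'' matching $0$. Dually, unresolved (the negation of resolved) matches $0 < \Pr[\varphi \given B] < 1$, i.e.\ open; the paper has already recorded that for consistent $B$ every $\varphi$ is exactly one of open or closed, so the two vocabularies partition the same way. These two translations rewrite the first disjunct and both clauses of the fourth disjunct verbatim.

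Next I would handle the trivial-resolution clause: $s$ can trivially resolve $\varphi$ given $SE$ is defined as $\varphi$ unresolved for $SE$ but resolved once some output of $s$ is adjoined to $SE$, i.e.\ $\varphi$ open for $SE$ but closed for $\both{\inter{O}{=}o}{SE}$ for some $o$ --- word for word the definition of \emph{can trivially close}. The remaining disjunct, ``$s$ is uninformative'', is the very same notion in both statements and needs no translation. With all four disjuncts term-for-term identical, the corollary follows immediately from Theorem~\ref{thm:imposs}. There is essentially no obstacle here; the only point worth care is that \emph{resolved} is deliberately symmetric between $\varphi$ and $\neg\varphi$ (covering both the $0$ and $1$ cases), so that the translation loses none of the theorem's content, consistent with the earlier remark that $\varphi$ is closed for $B$ iff $\neg\varphi$ is.
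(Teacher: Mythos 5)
Your proposal is correct and matches the paper's treatment exactly: the paper presents Corollary~\ref{cor:imposs-privacy} with no separate proof, introducing it with ``We can restate Theorem~\ref{thm:imposs} as,'' so the intended argument is precisely the term-by-term translation you carry out (resolved/closed, unresolved/open, trivially resolve/trivially close, with uninformative unchanged). Your explicit check of these definitional correspondences, including the remark that open and closed partition the consistent case, is just a careful spelling-out of what the paper leaves implicit.
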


Corollary~\ref{cor:imposs-privacy} implies the following:
\begin{cor}\label{cor:imposs-statistical-disclosure}
Consider $SE = \{\inter{O}{=}s(\inter{X},\inter{A}), \inter{X}{=}X, \inter{A}{=}A\}$.
For all propositions $\varphi$, if
 $\varphi$ is unresolved from $SE$ and
 $s$ is informative, then
 there exists an adversary that experiences a statistical disclosure
 for $\varphi$ upon seeing some output of $s$.
\end{cor}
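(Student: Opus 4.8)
The plan is to derive this corollary directly from Corollary~\ref{cor:imposs-privacy} (equivalently Theorem~\ref{thm:imposs}) by a case analysis on its four-way disjunction. First I would invoke Corollary~\ref{cor:imposs-privacy} for the given $\varphi$ and observe that the two hypotheses immediately eliminate two of the four disjuncts: assuming $\varphi$ is unresolved from $SE$ rules out the first disjunct ($\varphi$ resolved from $SE$), and assuming $s$ is informative rules out the second ($s$ uninformative). Hence one of the two remaining disjuncts must hold, and in each case I would exhibit a concrete adversary — that is, a choice of background knowledge (always including $SE$) — that is unresolved about $\varphi$ a priori but for which some attainable output $o$ of $s$ resolves $\varphi$, which is exactly what it means to \emph{learn} $\varphi$, and hence to experience a statistical disclosure for $\varphi$.

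In the third-disjunct case, $s$ can trivially resolve $\varphi$ given $SE$, so there is an output $o$ for which $\varphi$ is unresolved for $SE$ yet resolved for $\both{\inter{O}{=}o}{SE}$. Here I would take the adversary with no auxiliary information beyond $SE$: it does not know $\varphi$ beforehand but resolves it upon seeing $o$. In the fourth-disjunct case, for every output $o$ the proposition $\varphi$ is unresolved for $\both{\inter{O}{\neq}o \lor \varphi}{SE}$ but resolved for $\both{\inter{O}{=}o, \inter{O}{\neq}o \lor \varphi}{SE}$. Here I would endow the adversary with the auxiliary condition $\inter{O}{\neq}o \lor \varphi$, the generalization of the ``Terry Gross is two inches shorter'' clue from Dwork--Naor. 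This adversary is genuinely ignorant of $\varphi$ (the ``unresolved'' half), yet observing output $o$ collapses the disjunction to $\varphi$ and thereby resolves it (the ``resolved'' half). Either way an adversary experiencing a disclosure exists.

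The only thing requiring care — and the closest thing to an obstacle — is checking that each exhibited adversary is a genuine witness rather than a degenerate one: the observed output $o$ must actually be possible under the adversary's background knowledge, or else ``seeing $o$'' is vacuous. Both requirements are already guaranteed by the phrasing of Corollary~\ref{cor:imposs-privacy}, since ``open''/``unresolved'' and ``closed''/``resolved'' each demand a consistent background. In the fourth case this makes $\both{\inter{O}{=}o, \inter{O}{\neq}o \lor \varphi}{SE}$ consistent, and since $\inter{O}{=}o \land (\inter{O}{\neq}o \lor \varphi)$ is just $\inter{O}{=}o \land \varphi$, we get $\Pr[\inter{O}{=}o \land \varphi \given SE] > 0$; similarly the third case gives $\Pr[\inter{O}{=}o \given SE] > 0$. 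I would state this explicitly to confirm that the disclosing output has positive probability and that the adversary truly learns $\varphi$ in the sense defined just before the corollary. No further computation is needed, as the quantitative content lives entirely in Lemmas~\ref{lem:open-disj} and~\ref{lem:open-eq-open-neq}.
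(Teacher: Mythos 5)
Your proposal is correct and follows essentially the same route as the paper's own proof: invoke Corollary~\ref{cor:imposs-privacy}, use the two hypotheses to strike out the first two disjuncts, and in the remaining two cases exhibit the adversary whose background knowledge is just $SE$ (trivial-resolution case) or $SE$ augmented with $\inter{O}{\neq}o \lor \varphi$ (the Dwork--Naor-style auxiliary clue), noting that passing from unresolved to resolved forces $\Cr[\varphi \given \inter{O}{=}o, B] \neq \Cr[\varphi \given B]$. Your extra check that the disclosing output has positive probability under the adversary's background is a sound refinement the paper leaves implicit in the consistency requirement built into ``open''/``closed'', but it does not change the argument.
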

\begin{proof}
Recall that a statistical disclosure happens for an adversary with
background knowledge $B$ if
$\Cr[\varphi \given O\equals o, B] \neq \Cr[\varphi \given B]$,
which will happen if $\varphi$ goes from unresolved to resolved.

Since
 $\varphi$ is unresolved from $SE$ and
 $s$ is informative,
by Corollary~\ref{cor:imposs-privacy},
either
 $s$ can trivially resolve $\varphi$ given $SE$ or
 for all $o$, $\varphi$ is unresolved for $\both{\inter{O}{\neq}o \lor \varphi}{SE}$ but
  resolved for $\both{\inter{O}{=}o, \inter{O}{\neq}o \lor \varphi}{SE}$.
In the first case, $\varphi$ goes from unresolved for $SE$ to
resolved for $\both{O{=}o}{SE}$ for some output $o$.
In the second case, $\varphi$ goes from unresolved for $\both{\inter{O}{\neq}o \lor \varphi}{SE}$ but
to resolved for $\both{\inter{O}{=}o, \inter{O}{\neq}o \lor \varphi}{SE}$ for any output $o$.
Either way, there is a statistical disclosure for some output $o$.
\end{proof}

Thus, it is impossible to have a system that both produces interesting
outputs and always prevents statistical disclosures.
Dwork and Naor's impossibility result~\cite{dwork08jpc}, which is even
stronger, can be viewed as a justification for focusing on the weaker
property of differential privacy.

\subsubsection{Nondiscrimination Interpretation}
\label{sec:nondiscrimination-interpretation}

For discrimination, the generic result corresponds to saying that
a lack of group parity (disparate impact) will always exist for some
subpopulation.

In more detail, for nondiscrimination, we say that a population
\emph{lacks all diversity} of an attribute $\varphi$ if either
everyone in that population has $\varphi$ or everyone in the
population lacks $\varphi$.
Otherwise, we say that the population \emph{has some diversity} of
$\varphi$.
Given a population $B_P$, let its \emph{$\psi$ subpopulation} be the
subpopulation of $B_P$ identified by keeping only those members of $B_P$
who have the attribute $\psi$.
We say that a population \emph{loses diversity} for $\varphi$ in its
$\psi$ subpopulation if the population has some diversity of $\varphi$ but
its $\psi$ subpopulation lacks all diversity of $\varphi$.
We say that $s$ \emph{can trivially remove diversity} for $\varphi$
from $SE$ if for some output $o$ of $s$, $SE$ loses diversity for
$\varphi$ in its $O{=}o$ subpopulation.

We can restate Theorem~\ref{thm:imposs} as
\begin{cor}\label{cor:imposs-nondisc}
Consider $SE = \{\inter{O}{=}s(\inter{X},\inter{A}), \inter{X}{=}X, \inter{A}{=}A\}$.
For all attributes $\varphi$, one of the following is true:
 $\varphi$ lack all diversity for $SE$,
 $s$ is uninformative,
 $s$ can trivially remove diversity for $\varphi$ from $SE$, or
 for all outputs $o$,
  the $\both{\inter{O}{\neq}o \lor \varphi}{SE}$ subpopulation of $SE$
  loses diversity of $\varphi$ in its $\inter{O}{=}o$ subpopulation.
\end{cor}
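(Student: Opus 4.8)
The plan is to recognize Corollary~\ref{cor:imposs-nondisc} as nothing more than Theorem~\ref{thm:imposs} re-expressed in the population/diversity vocabulary introduced just above it, so the proof reduces to exhibiting a faithful dictionary between the probabilistic predicates of the theorem and the diversity predicates of the corollary, and then checking that the four alternatives line up clause-by-clause. First I would fix the dictionary by identifying a population with a background condition $B$, as the frequentist reading in Section~\ref{sec:background} licenses. Under this identification, the statement that a population \emph{lacks all diversity} of $\varphi$ --- every member has $\varphi$, or every member lacks it --- is exactly $\Pr[\varphi \given B]{=}1$ or $\Pr[\varphi \given B]{=}0$, which is the definition of $\varphi$ being \emph{closed} for $B$; dually, \emph{having some diversity} of $\varphi$ is $0 < \Pr[\varphi \given B] < 1$, i.e.\ $\varphi$ being \emph{open} for $B$. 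Finally, passing to the \emph{$\psi$ subpopulation} of $B$ --- keeping only members with attribute $\psi$ --- is frequentist conditioning on $\psi$, i.e.\ moving to the background condition $\both{\psi}{B}$.

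With the dictionary in hand, I would translate the four alternatives of Theorem~\ref{thm:imposs} one at a time. The first, ``$\varphi$ is closed for $SE$,'' becomes ``$SE$ lacks all diversity of $\varphi$.'' The second, ``$s$ is uninformative,'' is a shared term and passes through unchanged. For the third, I would unfold the definition of \emph{trivially close}: for some output $o$, $\varphi$ is open for $SE$ but closed for $\both{\inter{O}{=}o}{SE}$; under the dictionary this reads ``$SE$ has some diversity of $\varphi$ but its $\inter{O}{=}o$ subpopulation lacks all of it,'' which is precisely ``$SE$ loses diversity for $\varphi$ in its $\inter{O}{=}o$ subpopulation,'' i.e.\ ``$s$ can trivially remove diversity.'' For the fourth, ``$\varphi$ is open for $\both{\inter{O}{\neq}o \lor \varphi}{SE}$ but closed for $\both{\inter{O}{=}o, \inter{O}{\neq}o \lor \varphi}{SE}$'' translates, applying the conditioning clause of the dictionary twice, to ``the $\both{\inter{O}{\neq}o \lor \varphi}{SE}$ subpopulation of $SE$ has some diversity of $\varphi$ but loses it in its $\inter{O}{=}o$ subpopulation.''

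Since each of the four translated alternatives is syntactically the corresponding alternative of the corollary, and the dictionary is a two-way equivalence rather than merely an implication, the disjunction guaranteed by Theorem~\ref{thm:imposs} transfers verbatim, establishing the corollary. I do not expect a genuine obstacle in this argument: it is a definitional unfolding, not a new computation. The one point that warrants care --- and the closest thing to a load-bearing step --- is confirming that the colloquial phrases ``everyone in the population has $\varphi$'' and ``its $\psi$ subpopulation'' are modeled exactly (not merely approximately) by the events $\Pr[\varphi \given B]{\in}\{0,1\}$ and by conditioning on $\psi$; both hold by the definitions of frequency and population restriction in Section~\ref{sec:background}, so the translation is lossless.
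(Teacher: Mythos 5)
Your proposal is correct and matches the paper's own treatment: the paper offers no separate proof, introducing the corollary with ``We can restate Theorem~\ref{thm:imposs} as,'' i.e.\ it relies on exactly the dictionary you spell out (lacks all diversity $\leftrightarrow$ closed, has some diversity $\leftrightarrow$ open, $\psi$ subpopulation $\leftrightarrow$ conditioning on $\psi$, trivially remove diversity $\leftrightarrow$ trivially close). Your version just makes that implicit translation explicit and verifies it clause-by-clause, which is the intended argument.
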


An example will make the consequences of this result more clear.
Consider a system $s$ that decided who to hire with the output $o$ meaning
hired and consider the attribute of being male $\varphi$.
Suppose that the system does not simply hire or not hire everyone,
meaning that it is informative.
Further suppose that it hires some but not all males, meaning it does
not trivially remove diversity for $\varphi$.
Then, there exists some subpopulation of all applicants such that that
subpopulation contains both males and females, but only the males are
hired from it.
In particular, the subpopulation of men and non-hired women
will go from
having diversity to lacking it when focusing on just the hired ones.
The consequence of this is that while we can demand that the system
hires both males and females, we cannot expect this to hold for all
subpopulations.

Given the contrived nature of the subpopulation of men and non-hired women,
this may seem obvious and uninteresting.
However, three points are worth bearing in mind.

First, this result shows that the absence of disparate impact for all
subpopulations cannot be met for interesting systems $s$.
Some bounds on the subpopulations considered must be placed on demands
for group parity, and future work can search for reasonable bounds
instead of trying to provide mechanisms providing universal
nondiscrimination.
This is similar in spirit to Dwork and Naor's result in that the
adversary was not particularly realistic there either, but the result
still forced researchers to give up the search for mechanisms ensuring
statistical nondisclosure in all cases.

Second, if any non-empty subpopulation of the
$\both{\inter{O}{\neq}o \lor \varphi}{SE}$ subpopulation can be
identified using some other attribute $\chi$, then it may seem that
the attribute $\chi$ is causing some form of intersectional
discrimination against those identified by $\varphi \land \chi$ for
attribute $\psi$ of getting hired ($\inter{O}{=}o$).
For example, suppose that all the applicants who are black falls into
the set of men and non-hired women, that is, no black women were
hired.
It is possible that some form of intersectional discrimination is at play,
that is, discrimination not against women per se, nor black people per
se, but rather against black women.

It is also possible that outcome came about by noise.
The possibility of noise looks more likely as the third attribute $\chi$
gets more contrived looking.
For example, if instead of it being a simple racial attribute,
consider $\chi$ identifying applicants who are from Spain and over the
age of 40.
Is there intersectional discrimination against older Spanish women?
Or is it just a fluke?

While we do not do so here, we conjecture that modeling the input
space $\mc{A}$ more explicitly as a rich and diverse set of attributes
would allow us to show that one can always identify some such $\chi$
that is simple.

Third, returning to the goal of this paper, our real aim was to show
that one can produce general results that apply to both privacy and
nondiscrimination and then specialize them for each.
Corollaries~\ref{cor:imposs-privacy} and~\ref{cor:imposs-nondisc}
meets this goal.

\subsection{Probabilistic Program Analysis}

There exists a significant body of work in the formal security literature on
achieving formal bounds on probabilistic properties, aimed at verifying
probabilistic security properties of programs.
Techniques used to estimate the
probability of outcomes include abstract
interpretation~\cite{monniaux2000abstract}, volume
computation~\cite{sankar2013static}, and sampling~\cite{claret2013dataflow}.
While this body of work has largely focused on security properties, as the
properties are equivalent at a mathematical level, techniques for the analysis
of probabilistic programs can be directly used for fairness properties.

In one
example, Albarghouti~et~al.~\cite{albarghouti2017bias} use a volume computation
technique to either verify that the program satisfies a probabilistic fairness
property or provides a counterexample that demonstrates the program is biased.
They focus on disparate impact as a notion of discrimination:
\[ \Pr[\inter{O}{=}o\given X{=}x_1] \leq k * \Pr[\inter{O}{=}o\given X{=}x_2] \]

Their approach requires the specification of two programs: $\mathtt{popModel}$ and $\mathtt{dec}$.
The program $\mathtt{dec}$ is the program that decides the outcomes $\inter{O}$ for individuals, and $\mathtt{popModel}$ generates random individuals from a population which are fed to $\mathtt{dec}$.
The problem of estimating disparate impact, then reduces to estimating the quantities $\Pr[\inter{O}{=}o \land X{=}x_i]$ and $\Pr[X{=}x_i]$ for $i \in \{1, 2\}$. All of these
quantities can be estimated from the outcome of the composed program $\mathtt{dec}\circ\mathtt{popModel}$ using methods from security.

As another example of the connection between privacy and
nondiscrimination, we note that probabilistic coupling forms the core
of approaches to ensuring both.
On the one hand, a line of work has used probabilistic couplings to
verify that an algorithm provides differential
privacy~\cite{barthe16lics,barthe16ccs,barthe17popl,albarghouthi17popl}.
On the other hand, Friedler~et~al.\@ have presented a series of
probabilistic notions of (un)fairness~\cite{friedler16arxiv}.
Their definitions \emph{structural bias} (their Def.~3.5),
\emph{direct discrimination} (their Def.~3.6),
and \emph{nondiscrimination} (their Def.~3.7)
each have couplings (their Def.~2.5) at their cores.

\subsection{Statistical Scrubbing of Inputs}

A separate body of work
\cite{salamatian2013elephant,makhdoumi2013privacy,calmon2012privacy,salamatian2015managing}
addresses inference attacks in data sets by mapping each row $D$ to a row $D'$
such that $Y$ is independent to sensitive attributes with respect to some fixed
background knowledge $B$. The key idea is that if the inputs $D$ are independent
to sensitive attributes $X$, $D'\bot X \given B$, then for any function $f$, $f(D') \bot X \given B$. We term such mappings from $D$ to $D'$ statistical scrubbing.

Independently, a body of
work~\cite{zemel13icml,feldman15kdd} in the machine learning literature
attempts to solve the same problem of statistical scrubbing in a data set to
achieve independence with respect to sensitive attributes for providing
group parity.

\subsection{Generalized Distance Metrics for Differential Privacy}

One idea that appears in both nondiscrimination and privacy literature is the
generalization of differential privacy from indistinguishability over neighboring databases
to indistinguishability over nearby inputs as measured by some distance metric. In~\cite{dwork12itcs}, Dwork~et~al.\@ propose a notion of fairness that states that similar people
should be treated similarly. Formally the definition requires that the distance in distribution over outputs be bounded by the distance between output, a classical Lipschitz continuity requirement. In~\cite{chatzikokolakis13pets}, Chatzikokolakis~et~al.\@ propose a similar generalization for different privacy notions that and apply the definition in the contexts
of smart meter privacy and geo-location privacy.

\section{Discussion}

We have explored the relationship between privacy and
nondiscrimination at a mathematical level.
We have shown that both use privacy and nondiscrimination definitions
come in associative and causal flavors.
We have shown, at a mathematical level, that the basic form of the two
associative definitions are identical and that the basic form of the
two causal definitions are identical.
This similarity has allowed us to show what the relationships are between both
associative definitions and both causal definitions at once.
It has also allowed us to re-use a proof about privacy to say
something about nondiscrimination.
We believe this work shows that, to avoid duplication of effort,
computer science research on privacy and nondiscrimination should not
be siloed into separate communities.

This is not to say that privacy and nondiscrimination are or should be
collapsed into one problem.
The definitions, despite having the same formal structures, differ in
their interpretations and what the variables model.
Neither can do the job of the other; attention must be paid to both.

Furthermore, privacy and nondiscrimination can interact both
positively and negatively.
For example, not disclosing one's race on an employment application
may help prevent discrimination in some settings.
Furthermore, nondiscrimination can lessen some privacy concerns.
For example, banning health insurers from discriminating based on
pre-existing conditions may lessen the degree of privacy patients with
expensive conditions seek.
On the other hand, privacy can hide discrimination.
Our work may be a starting point for exploring these interactions.

Lastly, in all cases, the mathematical characterizations we use are
simplifications of the the social norms and laws they characterize.
For one, both privacy and nondiscrimination norms contain exceptions.
An example for privacy is that patient--physician confidentiality may
be set aside by court orders.
As for nondiscrimination, disparate impact is allowed where a
``business necessity'' exists.
Furthermore, nondiscrimination law contains many features that we do
not capture, such as intent.
In particular, we replaced the vague concept of
\emph{intent} with the more precise concept of \emph{cause}.
Exploring what intent might look like for automated systems could
provide more nuanced models of disparate treatment.
Nevertheless, we believe our mathematical characterizations capture
the essence of what each norm or law is attempting to achieve.

\paragraph{Acknowledgements.}
We gratefully acknowledge funding support from
the National Science Foundation
(Grants 1514509, %
1704845, %
and 1704985). %
The opinions in this paper are those of the authors and do not
necessarily reflect the opinions of any funding sponsor or the United
States Government.

\bibliographystyle{IEEEtran}
\bibliography{no-edit}

\appendix

\section{Example System}
\label{app:ce}

Here we present an example system showing that $\varphi$ could be
closed for $\both{O{\neq}o \lor \varphi}{SE}$ despite not being closed
for $SE$.
This motivates the condition that the system $s$ cannot trivially
close $\varphi$ given $SE$ in Lemma~\ref{lem:imposs}.
In particular, note that example below does trivially close $\varphi$
given $SE$ when the output is $\msf{nonpositive}$.

As before, let $SE$ be $\inter{O}{=}s(\inter{X},\inter{A})$, $\inter{X}{=}X$, and $\inter{A}{=}A$, and let it hold.
Consider the program $s(x,a) = \msf{pos}(x,a)$ which returns whether $x$ is positive, with $\msf{positive}$ meaning yes it is and $\msf{nonpositive}$ meaning no it is not.
Since $\msf{pos}$ ignores its second input, we will drop it.
Suppose that the range $\mc{X}$ of $X$ is $\{0, 1, 2\}$.

Consider the sensitive condition $\varphi$ that is whether $X$ is even, that is, $\msf{even}(X)$.
Seeing the output $\inter{O}$ only sometimes reveals whether $\msf{even}(X)$ holds.
If the output is $\msf{pos}(\inter{X}) = \msf{nonpositive}$, then the input must have been $0$, and $\msf{even}(X)$ must hold.
If the output is $\msf{pos}(\inter{X}) = \msf{positive}$, then the input could have been $1$ or $2$, and $\msf{even}(X)$ may or may not hold.
The following diagram summarizes this state of affairs:
\begin{center}
\begin{tikzcd}[math mode=true, row sep=0.1em]%
\msf{even}(X) & X                      & \inter{X} = X & O = \msf{pos}(\inter{X}) \\
              & 0\arrow[ld]\arrow[r]   & 0\arrow[rd]   &                          \\
\msf{true}    &                        &               & \msf{nonpositive}       \\
              & 1\arrow[ld]\arrow[r]   & 1\arrow[rd]   &                          \\
\msf{false}   &                        &               & \msf{positive}           \\
              & 2\arrow[luuu]\arrow[r] & 2\arrow[ru]   &                          \\
\end{tikzcd}
\end{center}

To understand how this relates to Lemma~\ref{lem:imposs},
let us focus on when the output $\inter{O}$ takes on the value $o = \msf{positive}$.
Note that $\varphi$ is open for $\both{\inter{O}{=}\msf{positive}}{SE}$.

Consider the background context
$\inter{O}{\neq}\msf{positive} \lor \msf{even}(X)$, that is, that
$\inter{O}{=}\msf{positive}$ implies $\msf{even}(X)$.
Given this context, seeing the output $\msf{positive}$ would imply $\msf{even}(X)$.
Thus, given this context, one can learn the sensitive condition for either output of $s$.

However, one would not even need to see the output to learn the sensitive condition given this context.
Consider an analysis by cases.
If $\inter{O}{\neq}\msf{positive}$ holds, then $X$ must be $0$ and $\msf{even}(X)$ holds.
If $\msf{even}(X)$ holds, then $\msf{even}(X)$ must hold.
So, either way $\msf{even}(X)$ must hold.

The issue is that $\inter{O}$ not having the value $\msf{positive}$
implied too much: it implies that the output must be
$\msf{nonpositive}$, which implies $\varphi$ holds.
The following table, which shows the values of $\inter{X}$ that lead
to each value for $\varphi$ and $\inter{O}$, illustrates this issue:
\begin{center}
\begin{tab}{lll}
                       & \multicolumn{2}{c}{$\msf{even}(X)$} \\
\cmidrule(l){2-3}
$\msf{pos}(\inter{X})$ & $\msf{true}$ & $\msf{false}$\\
$\msf{nonpositive}$    & 0 &  \\
$\msf{positive}$       & 2 & 1\\
\end{tab}
\end{center}
Note that no value is both odd and nonpositive, meaning that $s$
trivially closes $\varphi$ for the nonpositive output.
This means that set of value of $\inter{X}$ that makes $\inter{O}{\neq}\msf{positive} \lor \varphi$ true, which is
equivalent to $\inter{O}{=}\msf{nonpositive} \lor \varphi$, only has
a value of $\inter{X}$ for when $\varphi$ is true.
Thus, it implies the truth of $\varphi$ without needing to see the
value of $\inter{O}$.

\end{document}